\theoremstyle{plain}
\newtheorem{theoA}{Theorem}[section]
\newtheorem{corA}[theoA]{Corollairy}
\newtheorem{propA}[theoA]{Proposition}
\theoremstyle{definition}
\newtheorem{deffA}[theoA]{Definition}
\newtheorem{definition-theorem}[theoA]{Definition-Theorem}
\theoremstyle{remark}
\newtheorem{remA}[theoA]{Remark}
\newcommand{\graphx}{X\textit{-graph}}
\newcommand{\graphy}{Y\textit{-graph}}
\newcommand{\chaitin}{\mathcal M} 
\newcommand{\chaitina}{\mathcal M^z} 
\newcommand{\chaitinb}{\mathcal M} 
\newcommand{\dprof}{\mathcal D} 
\newcommand{\Dprof}{\Delta} 
\newcommand{\gen}{\mathcal{G}} 
\newcommand{\rt}{\textrm{RT}} 
\newcommand{\brt}{\tau} 
\newcommand{\U}{\mathcal{U}} 
\newcommand{\soph}{\textrm{Soph}} 
\newcommand{\depth}{\textrm{Depth}} 
\newcommand{\bdepth}{\textrm{Depth}^\bb} 
\newcommand{\spit}{\textrm{Reach}} 
\newcommand{\bb}{\textsf{B}} 
\newcommand{\bbinv}{\bb^{-1}} 
\newcommand{\namecbb}{busy badger} 
\newcommand{\cbb}{\boldsymbol{B}} 
\newcommand{\cbbinv}{\cbb^{-1}} 
\newcommand{\condcbb}[1]{\cbb_{#1}} 
\newcommand{\hd}{H} 
\newcommand{\bs}[1]{\{0,1\}^{#1}} 
\newcommand{\given}{\, \rvert \,} 
\newcommand{\giv}{\rvert} 
\newcommand{\len}[1]{\rvert#1\rvert} 
\newcommand{\halting}[1]{\mathcal{H}^{#1}} 
\newcommand{\finite}[2]{#1_{[#2]}} 
\newcommand{\halts}{\searrow} 
\newcommand{\finom}[1]{\finite{\Omega}{#1}} 
\newcommand{\elog}{\sim} 
\newcommand{\llog}{\lesssim} 
\newcommand{\glog}{\gtrsim} 
\newcommand{\eun}{\asymp} 
\newcommand{\lun}{\curlyeqprec} 
\newcommand{\gun}{\curlyeqsucc} 
\newcommand{\suchthat}{\colon} 
\newcommand{\st}{\colon} 
\newcommand{\N}{\mathbb N} 
\newcommand{\deq}{\stackrel{\text{df}}{=}} 
\newcommand{\et}{~~\textrm{and}~~} 
\newcommand{\disand}{\qquad \textrm{and}\qquad} 
\newcommand{\discom}{\,\textrm{,}\qquad} 
\newcommand{\noi}{\noindent} 
\newcommand{\nit}[1]{\noindent\textit{#1}}
\def \be {\begin{equation}}
\def \ee {\end{equation}}
\def \bes {\begin{equation*}}
\def \ees {\end{equation*}}
\def \bea {\begin{eqnarray}}
\def \eea {\end{eqnarray}}
\def \beas {\begin{eqnarray*}}
\def \eeas {\end{eqnarray*}}
\author{Charles Alexandre B\'edard}
\title{Relativity of Depth and Sophistication} 
\begin{document}
\maketitle
%

\begin{abstract} Logical depth and sophistication are two quantitative measures of the non-trivial organization of an object.
Although apparently different, these measures have been proven equivalent, when the logical depth is renormalized by the busy beaver function.
In this article, the measures are relativized to auxiliary information and re-compared to one another.
The ability of auxiliary information to solve the halting problem introduces a distortion between the measures.
Finally, similar to algorithmic complexity, sophistication and logical depth (renormalized)
each offer a relation between their expression of $(x, y)$, $(x)$ and~$ (y \given x)$.
\end{abstract}

\section{Introduction}

Around us are many objects that are neither completely trivial nor completely random. They conceal patterns and structures, buried under incidental disorganization.
As Bennett~\cite{bennett1995logical} coins it, they ``contain internal evidence of a nontrivial causal history''. 
Such objects are difficult to model and to explain, yet, \emph{interesting}.
%
And interesting itself is the task of formalizing mathematically this very notion. Computability theory has led to the development of algorithmic information theory (AIT) and computational complexity theory, two domains in which formal notions for this ``interestingness'' have been casted.

Embedded in AIT is the approach of nonprobabilistic statistics, proposed by Kolmogorov~\cite{kolmogorov1974talk} in the mid 70's, which attempts to distil the ``concealed patterns and structures'' from the apparent ``incidental disorganization''. As in probabilistic statistics, the mission of this approach is to find the most plausible model that supports the object. 
Such a model is identified to the simplest one that entails a nearly shortest description of the 
object in two parts. The first part describes the model (structures and patterns) and the second part is a canonical specification of the precise object 
 among all of which are consistent with the model (incidental randomness). Kolmogorov pointed out that the description length of such a model is a value of particular interest. Koppel~\cite{koppel1987complexity} (indirectly) referred to this quantity as the \emph{sophistication} of the object, a first notion of interestingness.

 Unlike probabilistic statistics, however, an individual object is considered, dismissing anything else ``it could have been''. It is not hypothesized to be drawn from some \emph{unexplained probabilistic} process; instead, it is supposed to have originated from an \emph{unknown computable} process\footnote{
This justifies the name ``algorithmic statistics'' also used as a synonym of nonprobabilistic statistics.}.
 This assumption goes hand in hand with the physical Church-Turing's thesis, namely, the belief that physical processes can be simulated with arbitrary accuracy by a universal computer. Indeed if the object comes from ``around us'' it has originated from an unknown physical process, whence the aforementioned assumption.

The other approach to quantify interestingness is from a radically different angle, incorporating ideas from computational complexity theory to AIT.
In the seminal paper~\cite{Kolmogorov1965} in which he defines algorithmic complexity, Kolmogorov concludes by mentioning the ``existence of cases in which an object permitting a very simple program, \emph{i.e.}, with very small complexity~$K(x)$, can be restored by short programs only as the result of thoroughly unreal duration''. He then writes of his intention of further studying the topic, but he published nothing later on the subject. More than twenty years later, in the late 80's, Bennett carried the torch. The most plausible causal histories of an object lie in the shortest computable descriptions. If all those descriptions entail a lengthy computation, this signifies a difficult deductive path and hence non-triviality of the object. Its \emph{Logical depth} is then the running time of its most plausible computable description. 

Although many people~\cite{koppel1987complexity, antunes2009sophistication, ay2010effective, bauwens2010computability} had observed connections between (variants of) sophistication and logical depth, it is only recently that they have been identified~\cite{antunes2017sophistication} as the same quantity, when logical depth is renormalized to map the ``thoroughly unreal duration'' back into a number comparable to a program length (\emph{e.g.}, the length of a model description).
In this paper, I analyse further those two apparently different --- but in fact equivalent --- approaches to measure the buried structures of an interesting object. 

Algorithmic complexity satisfies the chain rule, Eq.~\eqref{eqcr}, which \emph{connects} the complexity of a pair~$(x,y)$ the complexity of $x$ and the complexity of $y$ relative to 
$x$.
 \emph{The goal of this paper is to investigate whether sophistication and depth also exhibit such a connection between~$(x,y)$, $(x)$ and~\mbox{$(y \given x)$}.}
The main exploration then regards the \emph{relativity} of depth and sophistication, namely, how the concepts change when the universal computer is supplemented with auxiliary information. I show that when both are relativized, sophistication no longer amounts to the renormalized logical depth (\S \ref{sec:crtp}). Their difference is shown to be a function of the difficulty to materialize the halting information of the auxiliary string (\S \ref{secgap}). I then reach the goal: I demonstrate that the depth (again, the renormalized version) of a pair of objects $(x,y)$ can be expressed as the maximum between the depth of $x$ and the depth of $y$ relative to $x$; sophistication of a pair admits a similar, yet distorted relation (\S \ref{secpair}). Finally, I revisit the so-called antistochastic strings from running time considerations (\S\ref{anholo}).  



 
\section{Preliminaries} \label{secpre}

Established notions of AIT and nonprobabilistic statistics, as well as elementary reformulations and generalizations are presented in this section. For attributions and more details, see Refs.~\cite{Li2008, shenbook2}. 

\subsubsection*{Notation}
I denote~$\N = \{0,1,2, \dots \}$ and~$\bs{*}= \{ \epsilon, 0, 1, 00, \dots \}$. I refer to finite bit strings simply as ``strings''. The first $i$ bits of a (finite or infinite) string~$x$ is denoted~$\finite{x}{i}$. The length of a string~$x$ and the cardinality of a set~$S$ are denoted~$\len x$ and~$|S|$; the context will distinguish the meaning. 
A quantity $Q$ may depend on some parameter $n$. The quantity $O(g(n))$ [$\Omega(g(n))$] denotes a positive function eventually upper bounded [lower bounded] by $c g(n)$, where $c$ is a constant.
I write $Q \eun f(n)$, $Q \lun f(n)$ and $Q \gun f(n)$ if, respectively, $ Q(n) - f(n)= \pm O(1)$, $Q(n) \leq f(n) + O(1)$ and $Q(n) \geq f(n) - O(1)$.
I write $Q \elog f(n)$, $Q \llog f(n)$ and $Q \glog f(n)$ if, respectively, $ Q(n) - f(n)= \pm O(\log n)$, $Q(n) \leq f(n) + O(\log n)$ and $Q(n) \geq f(n) - O(\log n)$.


\subsection{Algorithmic Complexity}

The question of whether --- and if so how --- one can robustly represent objects ``around us'' digitally (\emph{i.e.}, using a finite alphabet) is not simple. It falls in the realm of philosophy of science, not that of coding theory. For a discussion on the topic, see Ref.~\cite{bedard2019emergence}.
Nonetheless, digital objects can easily be encoded in strings, thereby restricting the theory to the latter.  
The algorithmic complexity $K(x)$ of a string $x$ is the length of the shortest program to compute $x$ on a universal computer. 
For a meaningful definition, a model of computation and a universal computer within the model need to be fixed. However,
from the ability of universal computers to simulate one another and, by the Church-Turing thesis, to simulate any computable process, the algorithmic complexity of a string is independent of the fixed universal computer, up to an additive constant. 
In this sense, the algorithmic complexity can then be viewed as a universal and absolute quantity of information --- or randomness --- in a string.

Chaitin~\cite{Chaitin1975} defines a similar model in which, the universal computer $\U$ is fixed to be a \emph{self-delimiting} Turing machine, \emph{i.e.}, it has a read-only one-way input tape and some work tapes. When the computation begins, a program~$p$ occupies the input tape and an auxiliary string~$z$ occupies a designated work tape. The computation succeeds only if the machine reaches a halting state while its read head is scanning the rightmost bit of~$p$, but no further. This forces the program to contain within itself the information about its own length. A successful computation is denoted by~$\halts$ and~$\U(p,z)$ is then defined to be the string displayed on the work tape at halting. Self-delimitation ensures that for any~$z$ the set~$\{q \st \U(q,z) \halts \}$ is a \emph{prefix-free} set of strings, namely, no member of which is a prefix of another. When no auxiliary information is provided,~$z$ is simply set to~$\epsilon$, and $\U(p,\epsilon)$ is abbreviated to~$\U(p)$.

The (prefix)  \emph{algorithmic complexity} is defined with respect to the above universal computer $\U$ as
\bes
K(x) \deq \min_p \{|p| \st \U(p) = x\} \,,
\ees
and its \emph{conditional} counterpart as
\bes
K(x\given z) \deq \min_p \{|p| \st \U(p,z) = x\} \,.
\ees
Multiple strings can be encoded into a single one via a computable bijection
$(x_1, x_2, \dots, x_n) \mapsto \langle x_1, x_2, \dots x_n \rangle 
$
uniformly defined for any~$n$.
The complexity of multiple strings is thus naturally defined as~$K(\langle x_1, x_2, \dots, x_n\rangle)$.

Let~$x^*$ and $(x\giv z)^*$ 
be the%
\footnote{In the case of multiple programs of minimal length, the fastest trumps.} 
 shortest programs that computes~$x$ with $\epsilon$ and with $z$ as auxiliary information, respectively. 
\begin{remA} \label{remxstar}
Observe that 
\begin{center} 
\begin{picture}(120,40)(-45,0)
\put(0,0){\framebox(30,30){$O(1)$}} 
\put(-23,15){\vector(1,0){20}}
\put(-43,5){\makebox(20,20){$x^*$}} 
\put(33,23){\vector(1,0){20}}
\put(33,7){\vector(1,0){20}}
\put(53,13){\makebox(20,20){$x$}} 
\put(58,-3){\makebox(20,20){$K(x)$}} 
\end{picture}\disand
\begin{picture}(120,40)(-45,0)
\put(0,0){\framebox(30,30){$O(1)$}} 
\put(-23,23){\vector(1,0){20}}
\put(-23,7){\vector(1,0){20}}
\put(-43,13){\makebox(20,20){$x$}} 
\put(-50,-3){\makebox(20,20){$K(x)$}} 
\put(33,15){\vector(1,0){20}}
\put(55,5){\makebox(20,20){$x^*$}} 
\end{picture}~~,
\end{center}
where the diagrams represent that the output(s) can be computed from the input(s) and a $O(1)$ advice. 
Indeed, $K(x)$ and $x$ can be computed from $x^*$ by measuring its length before executing it. And $x^*$ can be determined by a parallel execution of programs of length~$K(x)$, until $x$ is produced.
\end{remA}

A very important relation is the \emph{chain rule},
\be \label{eqcr}
K(x,y) \eun K(x) + K(y \given x^*) \,,
\ee
as it entails a symmetric notion of \emph{mutual information}, so defined as 
\bes
I(x : y) \deq K(y) - K(y \given x^*) \,.
\ees
The ``$\lun$'' side of Equation~\eqref{eqcr} is easily observed, as one way to compute~$\langle x,y \rangle$ is to copy and then execute $x^*$, which can then serve as an auxiliary string to $(y \giv x^*)^*$. At this stage, $\langle x,y \rangle$ can be computed. The ``$\gun$'' side, harder to prove, states that the previous procedure to compute $\langle x, y \rangle$ is nearly optimal in terms of program length.

Observe that by the information equivalence of $x^*$ and $\langle x,K(x)\rangle$, Remark~\ref{remxstar},~$K(y \given x^*) \eun K(y \given x, K(x))$. This is convenient to write the relativized chain rule as
\bes
K(x,y \given z) \eun K(x \given z) + K(y \given x, K(x \given z))\,.
\ees

\subsubsection*{Halting Information} \label{sechalting}

To determine whether, for a given~$p$,~$\U(p)$ is a halting computation or not is an undecidable task. The \emph{halting problem} is perhaps the most famous of computability theory. It can perfectly be framed in AIT, and even, better quantified. 

As suggested by Turing~\cite{turing37}, the halting problem can be encoded into bits. The most straightforward way of doing so is to define the infinite string~$\halting{}$ whose~$i$-th bit is~$1$ if and only if the $i$-th program, in lexicographic order, halts. I denote~$\halting{\leq j}$ the first~$2^{j+1}-1$ bits of~$\halting{}$, which encode the solution to the halting problem for all programs of length~$\leq j$. 
Such a representation of the halting problem is highly redundant, since the same information can be given in much fewer bits. In fact, together with~$j$, the number $\omega_j$ of programs of length $\leq j$ that halt suffices, because one can recover~$\halting{\leq j}$ by running all programs no longer than $j$ in parallel until $\omega_j$ of them have halted. 

%
%
A more elaborate way of encoding the halting problem is through Chaitin's \emph{halting probability} \cite{Chaitin1975} defined as
\bes
\Omega = \sum_{p:\,\U(p) \halts} 2^{-|p|} \,.
\ees
Since the set of halting program is prefix-free, Kraft inequality implies that the sum converges to a number smaller than~$1$. If a program is given to the reference machine~$\U$ with bits picked at random, then the probability that the computation ever halts is~$\Omega$. The first~$j$ bits of~$\Omega$, denoted~$\finite{\Omega}{j}$, can be used to compute~$\halting{\leq j}$,

\begin{center} 
\begin{picture}(120,35)(-45,0)
\put(0,0){\framebox(30,30){$O(1)$}} 
\put(-23,15){\vector(1,0){20}}
\put(-48,5){\makebox(20,20){$\finite{\Omega}{j}$}} 
\put(33,15){\vector(1,0){20}}
\put(53,5){\makebox(25,20){$\halting{\leq j}$}} 
\end{picture} \,.
\end{center}
This is done by running all programs in a dovetailed fashion, and adding~$2^{-|p|}$ to a sum~$M$ (initially set to~$0$) whenever a program~$p$ halts. When the first $j$ bits of the sum stabilize to the first $j$ bits of $\Omega$, \emph{i.e.}, $\finite{M}{j}=\finite{\Omega}{j}$, then no program of length $\leq j$ will ever halt, since such an additional contribution to the sum would contradict the value of~$\Omega$. This process is said to \emph{lower semi-compute}~$\Omega$, since it always returns smaller numbers than~$\Omega$ and they converge to it in the limit of infinite time.

$\Omega$ is an example of an \emph{incomressible} string, namely that~$K(\finom{j}) \gun j$. This can be proved from a \emph{Berry paradox} argument: the ability of $\finom{j}$ to compute $\halting{\leq j}$ also endows it with the ability to produce~$\zeta$, the first string in lexicographic order with complexity $>j$. However, such a computation of~$\zeta$ from~$\finom j$ is only consistent if~$K(\finom{j}) \gun j$. Moreover, as any string of length~$j$,~$\finom j$ has (prefix) complexity $\lun j + K(j)$. Hence,
\bes
j \lun K(\finom j) \lun j + K(j) \,.
\ees


\subsection{Nonprobabilistic Statistics}

Before overviewing the algorithmic treatment of statistics, 
I introduce elementary concepts and notations about subsets of~$\N^2$. They will be useful for illustrational purposes, conciseness of notation and most importantly to unify different definitions under the same umbrella.

\subsubsection*{The Help of $\N^2$}\label{secn2}

A set $\Psi\subseteq \N^2$ is \emph{upwards closed} [resp. \emph{rightwards closed}] if
\bes
(i, \psi) \in \Psi ~\implies ~\forall k,~~ (i, \psi + k) \in \Psi ~~~~~~[\text{resp. } (i + k, \psi) \in \Psi] \,.
\ees
A~\emph{profile} is an upwards and rightwards closed subset of $\N^2$.  The~$L^\infty$-metric endows $\N^2$ with a distance. The \emph{distance} between~$(a_1,a_2)$ and~$(b_1,b_2)$ is given by~$\max (|a_1-b_1|\,, |a_2-b_2|)$.

Let~$\Psi$ be a profile. Its \emph{boundary}~$\partial \Psi$ is the 
subset at distance unity of some point outside of~$\Psi$, \emph{i.e}, each point in~$\partial \Psi$ has at least one of its~$8$ neighbours outside of~$\Psi$.
The \emph{$X$-graph} of 
~$\Psi$ is
\bes
\graphx(\Psi) \deq \{(i,\psi) \in \partial \Psi \st (i,\psi-1) \notin \Psi \} \,.
\ees
It is the graph of some function~$\psi(i)$ represented as usual by the~$Y$ versus~$X$ axes.
The \emph{$Y$-graph} of 
~$\Psi$ is analogously defined as
\bes
\graphy(\Psi) \deq \{(i,\psi) \in \partial \Psi \st (i-1,\psi) \notin \Psi \} \,,
\ees
and is the graph of some function~$i(\psi)$ unusually represented by the~$X$ versus~$Y$ axes. See Figure~\ref{figxgraph}. 

\begin{figure}
	\begin{center}
		\includegraphics[
		trim={2cm 3cm 2cm 6cm},clip, width=8cm]{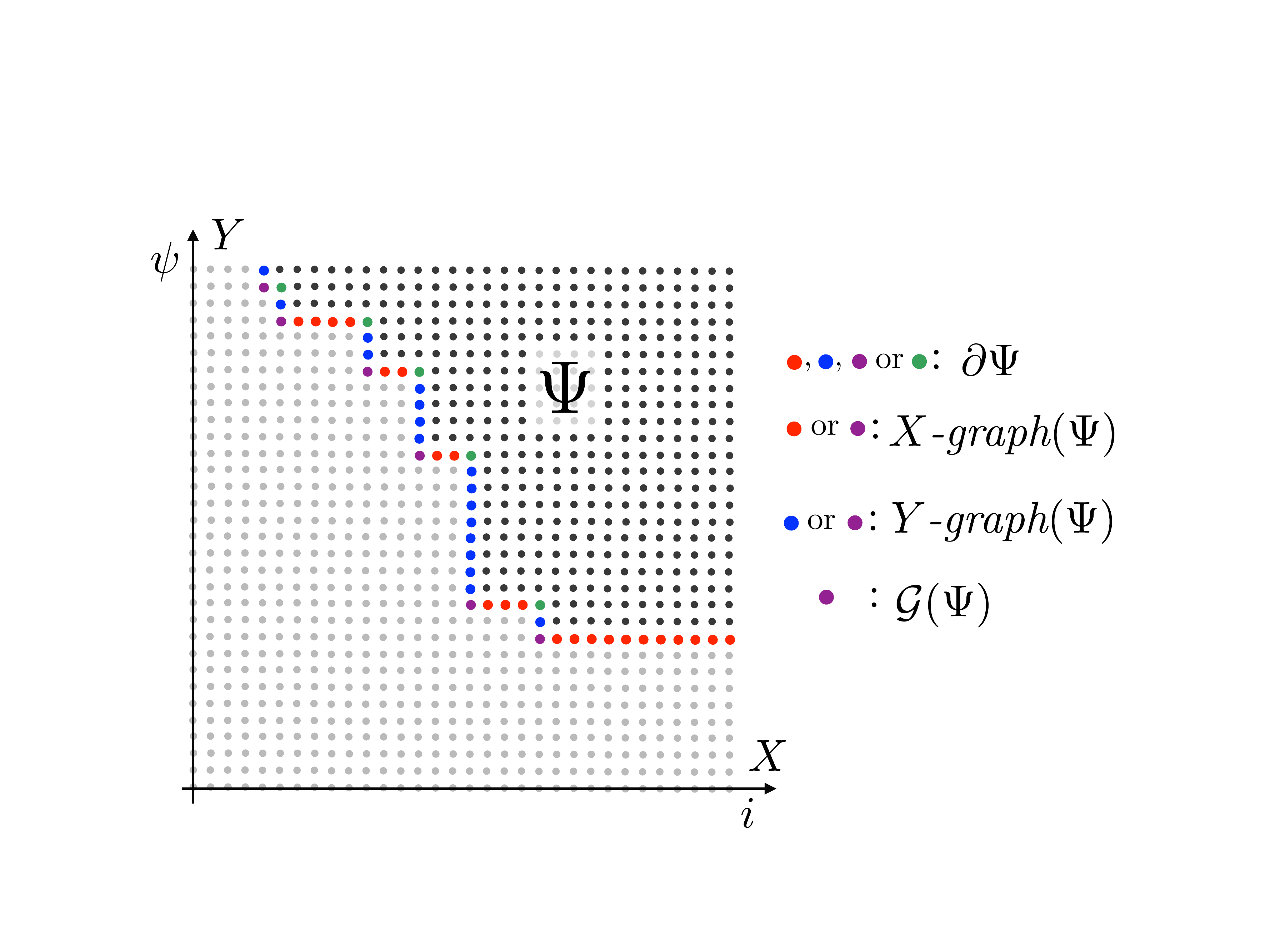}
	\end{center}
\caption{The anatomy of a profile $\Psi \subseteq \N^2$.}
\label{figxgraph}
\end{figure}

\begin{remA}\label{remex}
(Let~$\Psi$ be upwards and rightwards closed.)
Both functions~$\psi (i)$ and~$i(\psi)$, represented respectively by the $X$-\emph{graph} and the $Y$-\emph{graph}, are non-increasing. These functions are in general noninvertible, but they are as close as they can get from being each other's inverse, specifically, 
\bes
\psi(i') = \psi' \implies i(\psi') \leq i' \disand i(\psi') = i' \implies \psi(i') \leq \psi' \,.
\ees
\end{remA}
%
A set $G \subseteq \N^2$ is said to \emph{generate}~$\Psi$ if the upwards and rightwards closure of $G$ gives $\Psi$. Such a closure is understood to be $\{(i',\psi')\in \N^2 \st \exists (i, \psi) \in G ~i \leq i' \et \psi \leq \psi' \}$. Of a particular interest is the minimal such set. The \emph{Generator set} of~$\Psi$ is defined as
\bes
\gen(\Psi) \deq \graphx(\Psi) ~\cap~ \graphy(\Psi) \,.
\ees
It corresponds to the convex corners of~$\Psi$, namely, the corners that have more neighbours outside than inside~$\Psi$. 
%

The \emph{sum} of two profiles $\Psi$ and $\Phi$ is defined as
\bes
\Psi + \Phi \deq \{(i,\psi+\phi) \st (i,\psi) \in \Psi \et (i,\phi) \in \Phi \} \,.
\ees
The $\varepsilon$\emph{-neighbourhood} of $\Phi$ includes all points at a distance $\leq \varepsilon$ of each of its points, hence enlarging the boundary. 
$\Psi$ is $\varepsilon$\emph{-close} to $\Phi$ if it is contained in an $\varepsilon$\emph{-neighbourhood} of $\Phi$. 
\begin{remA}\label{remclose}
(Let~$\Psi$ and~$\Phi$ be upwards and rightwards closed.) $\Psi$ is $\varepsilon$-close to $\Phi$ 
\begin{itemize}
\item[(i)]\label{remclosegen} if and only if~$\gen(\Psi)$ it is contained in an $\varepsilon$\emph{-neighbourhood} of $\Phi$
\item[(ii)]\label{remclosegraph} if and only if $\psi(i) + \varepsilon \geq \phi(i+\varepsilon)$,
\end{itemize}
where~$\psi$ and~$\phi$ are the functions represented by the respective $X$-graphs.
\end{remA}

I denote $\Psi \eun \Phi$ or $\Psi \elog \Phi$ if $\Psi$ and $\Phi$ are both $O(1)$-close or $O(\log n)$-close to one another, respectively.
Those relations find their usefulness in the \emph{two-dimensionality of the approximation}, which cannot be expressed so concisely, for example, by the $X$-graphs.


\subsubsection*{Quantifying ``Good'' Models}

For a review of the field of nonprobabilistic statistics, see Ref.~\cite{vereshchagin2017algorithmic}. 

A finite set~$S$ that contains a string~$x$ is an \emph{algorithmic statistic} of~$x$. It is also called a \emph{model} of~$x$, since it puts together strings that share common properties with~$x$, precisely those that define~$S$. Opposing qualities are expected of a good model. On the one hand, the model should be simple, tending to minimize~$K(S)$. The latter is the length of the shortest program that computes an encoding of the lexicographical ordering of the elements of~$S$ and halts. On the other hand, the canonical description of~$x$ \emph{via the model} should also be minimized. In the case of finite sets as models, such a description amounts to describing first~$S$ and then specifying~$x\in S$ by some canonical encoding, for instance, by giving its index~$i^x_S$ in a lexicographical ordering of the elements of~$S$. 

More precisely, each model~$S\ni x$ entails a \emph{two-part description} of~$x$. The first part consists of describing the model by its shortest program~$S^*$ (of length~$K(S)$) and the second part singles out~$x$ in~$S$, thanks to its index~$i_S^x$ (of~length~$\log |S|$). This second part is known as the \emph{data-to-model code}, but really, it should be called the \emph{model-to-data code}. This means that 
\bes \label{eq2part}
D(S^*, i^x_S) \deq \alpha S^* i_S^x
\ees
is a self-delimiting program that computes~$x$, where the prefix $\alpha$ is a fixed program (of length $O(1)$) which ensures the correct execution of the two-part description. Note that the second part of the code does not need any additional prefix for self-delimitation, since its length $|i_S^x| = \lceil \log |S| \rceil$ can be computed (by~$\alpha$) from~$S^*$. The length of the two-part description is therefore given by
\bes
\len{D(S^*, i^x_S)} = K(S) + \log|S| + \len\alpha  \,.
\ees

The tradeoff between the simplicity of the model and the length of its corresponding two-part description can be expressed by a profile on~$\N^2$: for each $S \ni x$, a dot can be marked at the coordinate~$(K(S), K(S) + \log|S| + \len\alpha)$. The upwards and rightwards closure of those dots yields what I call the \emph{description profile},
\bes
\Lambda_x = \{ (i,\lambda) \st \exists S\ni x\,,~ i \leq K(S) \et K(S) + \log|S| + \len\alpha \leq \lambda \} \,.
\ees
%
The~$\graphx$ of $\Lambda_x$ represents what is known~\cite{vv2002} as the \emph{constrained minimum description length function}
\bes
\lambda_x(i)=\min_{S\ni x}\{K(S) + \log|S| + \len\alpha \st K(S)\leq i  \} \,.
\ees
%
%
For~$i$ large enough,~$\lambda_x(i)$ reaches values close to~$K(x)$. In the worse case, this is achieved for~$i\eun K(x)$ as witnessed by the model~$\{ x\}$. A model~$S$ that entails a two-part description essentially as short as the shortest program is called~\emph{sufficient}. Kolmogorov pointed out that a sufficient model~$S_0$ of \emph{minimal} complexity describes all the structure of~$x$, or in Vit\'anyi's words~\cite{vitanyi2006meaningful}, its ``meaningful information'', but not more. The remaining information~$i^x_{S_0}$ is the incidental or random part of~$x$. The complexity~$K(S_0)$ of a minimal sufficient statistics is now known
 as the~\emph{sophistication} of~$x$. For a precise definition, one needs to clarify what is meant by ``reaches values close to~$K(x)$'', which introduces a resolution parameter~$c$, 
\bes \label{eqdefsoph}
\soph_c(x) \deq \min_{S \ni x} \{K(S) \st K(S) + \log|S| + |\alpha| \leq K(x) + c \} \,.
\ees
Although the sophistication of a string~$x$ is intuitively thought to be the value of~$\soph_c(x)$ for a resolution~$c$ as small as possible, it is meaningful to view~$\soph_c(x)$ as a function of~$c$ since it allows to connect sophistication with the description profile~$\Lambda_x$. 
First, one translates~$\Lambda_x$ down on the $Y$ axis by~$K(x)$ to define
\beas
\Delta_x &\deq& \Lambda_x - (0,K(x)) \\
&=& \{ (i, c) \st \exists S\ni x,~K(S)\leq i \et K(S) + \log|S| + |\alpha| \leq K(x) + c  \}\,.
\eeas 
The $Y$-graph of~$\Delta_x$ is obtained by minimizing the first coordinate, with the second coordinate fixed, yielding~$(\soph_c(x),c)$.

\subsubsection*{Robustness of the Method}

The method used to arrive at a definition of sophistication may appear somewhat arbitrary. Among the different model-selection principles, why minimizing the two-part description? And why imposing finite sets as a model class? Each of these issues have been specifically addressed and the method shows robustness since different model-selection principles and different model classes yields essentially the same measure of sophistication. 

In the method presented here, the quality opposed to the simplicity of the model was the minimality of the two-part description, known as the \emph{minimum description length principle}. The trade-off between those qualities is expressed by the function $\lambda_x(i)$ from which sophistication was read out. Another quality of a model that opposes its simplicity is guided by the \emph{maximum likelihood principle}, which favours the models with as few elements as possible. This trade-off is displayed by the \emph{constrained maximum likelihood function}, 
\bes
h_x(i) = \min_{S\ni x} \{\log |S| \st K(S) \leq i\}\,.
\ees 
This is Kolmogorov's original~\cite{kolmogorov1974talk} \emph{structure function}. Another principle is to minimize the \emph{randomness deficiency}, valuing models~$S$ in which~$x$ is most typical. This defines the function
\bes
\beta_x(i) = \min_{S\ni x} \{\log |S| - K(x \given S) \st K(S)\leq i\}\,,
\ees 
since the lack of typicality is measured by how far from the data-to-model code is the shortest program for computing~$x$ given~$S$.


Importantly, Vereshchagin and Vit\'anyi~\cite{vv2002} showed that the three functions~$\lambda_x$,~$h_x$ and~$\beta_x$ encode the same information, since they are all connected to each other by affine transformations (within logarithmic precision). In particular, the minimal value at which~$\lambda_x(i)$ reaches close to $K(x)$, that is, the sophistication, can be defined alternatively from the maximum likelihood or the randomness deficiency principles. In this paper, the attention is restricted to $\lambda_x$, or more specifically, to its corresponging description profile~$\Lambda_x$. 

The other critique that can be formulated about the path used to define sophistication is the lack of generality of finite sets as a model class. In fact, some people~\cite{gell1996information, gacs2001algorithmic} have generalized the model class to computable probability distributions, possibly infinite. The complexity of the model then becomes that of the distribution and the length of the data-to-model code is then given by the Shannon-Fano code. The constrained minimum description length function, analogous to~$\lambda_x$, is then expressed in terms of these quantities, and again the value at which the function reaches close to~$K(x)$ is identified. Gell-Mann and Lloyd called it~\emph{effective complexity}~\cite{gell1996information}. Yet one more model class possibly even more general is given by total functions\footnote{In fact, the term sophistication was coined by Koppel as he was grasping the idea through total functions as a model class.}, where again, two part-descriptions are analogously defined.

Vit\'anyi~\cite{vitanyi2006meaningful} showed that whether the model class is fixed to finite sets, computable distributions or total functions, the respective description profiles would be close to one another, underlining again the robustness of sophistication, and the sufficiency of finite sets as model class. 

Finally, a very important result of algorithmic statistics states that the description profile~$\Lambda_x$ can essentially take all possible shapes.

\begin{theoA}[All shapes are possible \cite{vv2002}]\label{thmallshapes}
Let~$k\leq n$. Let $\gen$ be some set of points that generates a profile~$\mathcal T$ by upwards and rightwards closure in such a way that $(0, n) \in T$ and~$(k,k) \in \partial T$. Then there exists a string~$x$ of complexity~$k + O(\log n) + K(\gen)$ and length~$n + O(\log n) + K(\gen)$ whose description profile~$\Lambda_x $ is $O(\log n) + K(\gen)$-close to $\mathcal{T}$.
\end{theoA}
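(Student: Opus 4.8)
The plan is to construct the string $x$ explicitly from a ``staircase'' of nested models that traces out the desired generator set $\gen$, and then verify that this family of models places the description profile $\Lambda_x$ close to $\mathcal T$ both from the inside (each model contributes a dot) and from the outside (no model does better). Concretely, enumerate the convex corners of $\mathcal T$ as $(i_1,\psi_1),(i_2,\psi_2),\dots$ with $i_1 < i_2 < \cdots$ and $\psi_1 > \psi_2 > \cdots$; the first corner should be taken near $(0,n)$ and the last near $(k,k)$. For each corner $(i_t,\psi_t)$ we want a finite set $S_t \ni x$ with $K(S_t) \eun i_t$ and $\log|S_t| \eun \psi_t - i_t$ (so that the two-part description length is $\eun \psi_t$), and we want these sets nested, $S_1 \supseteq S_2 \supseteq \cdots$, with $S_{\mathrm{last}}$ essentially $\{x\}$.

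The standard device for the construction is a random selection relative to the halting information. First I would fix the ambient set $B = \bs{n'}$ for an appropriate length $n' \eun n$, and then define a decreasing chain of subsets by choosing, at each stage $t$, a subset $S_t \subseteq S_{t-1}$ of size $2^{\psi_t - i_t}$ that contains $x$; the element $x$ itself is selected last. To make $K(S_t)$ come out to $i_t$ rather than something larger, the successive refinements must be describable cheaply given the previous model plus a small number of ``fresh'' bits, and — crucially — $x$ must be chosen to be maximally non-typical / antistochastic within each $S_t$ at exactly the right scales, which is where the halting sequence $\halting{\le m}$ (equivalently $\finom m$, as recalled in the Preliminaries) enters: using $O(\log n)$ bits of halting information one can pick, inside a set $S$, an element of near-maximal randomness deficiency, forcing any simpler model to be ``too big''. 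The additive overhead $K(\gen)$ appears because the shape data (the list of corners) must itself be supplied to the construction.

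The two directions of the final approximation are then: (upper bound on $\lambda_x$) the explicit models $S_t$ witness that $\Lambda_x$ is contained in the $O(\log n)+K(\gen)$-neighbourhood of the lower-left region of $\mathcal T$ — i.e. $\lambda_x(i_t) \llog \psi_t$ — and by Remark~\ref{remclose}(i) it suffices to check this at the generators; (lower bound) one must show $\Lambda_x$ contains $\mathcal T$ up to the same precision, i.e. \emph{no} model $S\ni x$ with $K(S) \le i$ achieves $K(S)+\log|S| \llog \psi$ when $(i,\psi)\notin\mathcal T$. This lower bound is the real content and the main obstacle: it is a counting/incompressibility argument showing that because $x$ was planted to be antistochastic at each relevant scale, a too-good model would let one compress the halting information below its true complexity $\gun m$, a contradiction. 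I would handle it by the usual argument that from a hypothetical good model $S$ one reconstructs enough of $\halting{\le m}$ (via the number of strings with small randomness deficiency in $S$, or via an enumeration of all ``simple small sets''), deriving $K(\finom m) \llog m - \Omega(\log n)$. Assembling these two bounds and invoking Remark~\ref{remclose} to pass from $X$-graph statements to the two-dimensional $\eun$/$\elog$-closeness gives the stated $\Lambda_x \elog \mathcal T$, with the complexity and length of $x$ reading off as $k + O(\log n) + K(\gen)$ and $n + O(\log n) + K(\gen)$ respectively from the top corner $(0,n)$ and the diagonal corner $(k,k)$.
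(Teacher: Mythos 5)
The paper does not supply a proof of Theorem~\ref{thmallshapes}; it cites it from Vereshchagin and Vit\'anyi~\cite{vv2002} and uses it as a black box, so there is no in-paper argument to match against. Judged on its own, your sketch has the right scaffolding for the \emph{upper} bound on $\lambda_x$: a chain of explicit models indexed by the generators of $\mathcal T$, each two-part-describing $x$ at the desired length, plus the observation that the shape data costs an extra $K(\gen)$ and that Remark~\ref{remclose} lets one check $\elog$-closeness generator by generator. That part is sound.

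The gap is in the lower bound, which you correctly flag as the real content but then dispatch with a plan that does not work as stated. You propose to show that a ``too good'' model $S\ni x$ would let one reconstruct enough of $\halting{\le m}$ to contradict $K(\finom{m})\gun m$. But a model $S$ with $K(S)\leq i$ and small $\log|S|$ only hands you an $(i+\log|S|)$-bit description of $x$; to reach a contradiction with the incompressibility of $\Omega$ you would need $x$ itself to determine $\gun\psi(i)$ bits of $\Omega$, and $\psi(i)$ can be as large as $n$, far exceeding $K(x)\eun k$. So the reduction cannot go through in the form you describe, nor can it be applied uniformly at every threshold $i$. What the Vereshchagin--Vit\'anyi proof actually does is an online elimination (game) argument: dovetail an enumeration of every candidate model that would make the structure function dip below $\mathcal T$ at some threshold, strike the elements of each such model out of a surviving pool of strings, verify by a counting estimate that the pool is never exhausted, and then \emph{define} $x$ from the final state of that process. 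This one mechanism delivers, simultaneously, the upper complexity bound $K(x)\lun k + O(\log n) + K(\gen)$ (the process is driven by $\gen$ and roughly $k$ bits of halting information), the non-existence of forbidden models (because $x$ outlived all of them by construction), and the lower bound $K(x)\gun k$ (because $x$'s position in the surviving pool encodes the halting data used to run the process). Your ``random selection relative to halting information plus antistochastic planting'' gestures at the intended effect but omits this mechanism, and without it the lower bound remains an assertion rather than a proof.
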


%

\subsection{Logical Depth and Time-Bounded Complexity}

One of the most beautiful surprises of algorithmic statistics is that its core concepts are directly related to running-time considerations. 

Hereinafter, $\rt(p)$ stands for the \emph{running time} of $p$, which is the number of computation steps that $\U$ executes on input $p$ before reaching a halting state. If the computation uses auxiliary information~$z$, then I denote $\rt(p,z)$ the running time of the computation $\U(p,z)$. 

An object~$x$ is \emph{deep} if most of its algorithmic probability corresponds to slow computations. The gist of this idea is captured by Bennett's second tentative definition~\cite{bennett1995logical} of \emph{logical depth}, with significance parameter $c$:
\beas
\depth_c(x) \deq \min_{p\colon \U(p)=x} \{\rt (p) \st |p|\leq K(x) + c \} \,.
\eeas

Running times can be very large, especially when interested by the deepest strings of a fixed length. 
The \emph{inverse busy beaver function} renormalizes those astronomical running times back into numbers of size comparable to program length. 

\begin{deffA} \label{def:btime}
The \emph{busy beaver} is a function  $\bb : \N \to \N$ defined by
\bes
\bb(n) \deq \max~\{\rt(p) \st  \U(p) \searrow ~\text{and}~ |p| \leq n  \}\,. 
\ees
\end{deffA}
\noi It is the maximal finite running time of a program of $n$ bits or less. 
Its inverse~$\bbinv(N)$ is then defined as the length of the shortest program that eventually halts after at least~$N$ steps:
\bes
\bbinv(N) = \min \{ |p| \suchthat \rt(p) \geq N \text{ (but finite)} \}\,. 
\ees
As a convenient shortcut, one can measure time right away in busy beaver units by defining the \emph{busy running time}~$\brt(p)$ of a program~$p$ as
\bes
\brt(p) \deq \bbinv (\rt (p))\,.
\ees
Deploying this definition, if~$p$ has a busy running time~$\brt(p)=d$, it means that there is a program of length~$d$, but none of length less than~$d$, that halts after~$p$.

\begin{deffA}\label{def:bdepth}
The \emph{busy beaver depth} of $x$, at significance level $c$ is defined here like in Ref.~\cite{antunes2017sophistication}, but with prefix instead of plain complexity:
\bes
\bdepth_c(x) 
\deq \min_{p\colon \U(p)=x}~\{\brt(p) \st |p| \leq K(x) + c \}\,.
\ees
\noi It amounts to the inverse busy beaver of the logical depth\footnote{The definition of logical depth on which Bennett settled in~Ref.\cite{bennett1995logical} imposes the condition $K(p) \geq |p| - c$ instead of~$|p| \leq K(x) + c$. It has been shown \cite{antunes2017sophistication} that in the plain complexity setting, the inverse busy beaver renormalization of such a definition of logical depth is~$O(1)$ close to the plain complexity counterpart of Def.~\ref{def:bdepth}, up to $O(1)$ precision also in the significance parameter.}. 
\end{deffA}



Related to logical depth is the concept of~\emph{time-bounded complexity}, 
%
\bes
K^t(x ) \deq \min_{p \st \U(p) = x}~\{|p| \st \rt(p) \leq t \} \,.
\ees
The notion was already mentioned in the conclusions of Kolmogorov's seminal paper~\cite{Kolmogorov1965}, as a proposed tool to ``study the relationship between the necessary complexity of a program and its permissible difficulty~$t$''. 
The quoted relationship can be explored through the \emph{time profile} $\mathcal L_x$, generated by the coordinates~$(\tau(p), \len p)$ for each program~$p$ that computes~$x$. 
Written differently,
\bes
\mathcal{L}_x = \{(i,\ell) \st \exists p~~ \U(p) = x \,,~ |p| \leq \ell \et \brt(p) \leq i  \} \,.
\ees
\noi Observe that
\be
\mathcal{L}_x = \{(i,\ell) \st K^{B(i)}(x) \leq \ell \} \,,
\ee
so $(i, K^{B(i)})$ is the $\graphx$ of $\mathcal L_x$. By a process analogous to the reading out of sophistication from the description profile~$\Lambda_x$, the busy beaver depth can be expressed from the time profile~$\mathcal L_x$. To do so, 
%
define
\beas \label{eqdepthprofile}
\dprof_x &\deq& \mathcal L_x - (0,K(x)) \\
&=& 
 \{ (i, c) \st \exists p,~\U(p)=x,~\brt(p)\leq i \et \len{p} \leq K(x) + c  \} \,.
\eeas
The $Y$-graph is obtained by minimizing the first coordinate, with second coordinate fixed, yielding~$(\bdepth_c(x),c)$.

%


The following remarkable result connects the description and time profiles, and so sophistication and depth.
\begin{theoA}[\cite{antunes2017sophistication, bauwens2010computability}]\label{thmsophdepth}
For all $x$,
\bes
\mathcal L_x \elog \Lambda_x \qquad \text{and so} \qquad \dprof_x \elog \Dprof_x \,.
\ees
\end{theoA}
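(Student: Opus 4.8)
\emph{Plan.} Since $\dprof_x$ and $\Dprof_x$ are obtained from $\mathcal L_x$ and $\Lambda_x$ by the \emph{same} translation $-(0,K(x))$, the second statement follows immediately from the first, so the task is to prove $\mathcal L_x\elog\Lambda_x$, i.e.\ that each of these upwards- and rightwards-closed profiles sits inside an $O(\log n)$-neighbourhood of the other. By Remark~\ref{remclose}(ii) this is a statement about the two $X$-graphs: writing $\ell_x(i)=K^{B(i)}(x)$ for the $X$-graph of $\mathcal L_x$ (as observed after its definition) and $\lambda_x(i)$ for that of $\Lambda_x$, it suffices to prove
\bes
\ell_x(i) \llog \lambda_x(i) \disand \lambda_x(i) \llog \ell_x(i)
\ees
with the understanding --- from Remark~\ref{remclose}(ii) --- that a logarithmic shift in the argument is also permitted; monotonicity of both functions then upgrades these to the two neighbourhood inclusions.

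\emph{Models give fast programs} (the inclusion $\Lambda_x\subseteq$ an $O(1)$-neighbourhood of $\mathcal L_x$, hence $\ell_x(i)\llog\lambda_x(i)$). Let $S\ni x$ be any model. The two-part program $\alpha S^* i^x_S$ has length $K(S)+\log|S|+|\alpha|$ and computes $x$; its running time is that of $S^*$ --- at most $B(K(S))$ since $|S^*|=K(S)$ --- plus the time to decode $S$ and read off its $i^x_S$-th element, which is a fixed computable function of $S$ and hence bounded by $B(K(S)+O(1))$, because running $S^*$ and then computing this bound is a halting program of length $K(S)+O(1)$. Using the folklore fact $2B(m)\le B(m+O(1))$ (run a length-$m$ champion, then run it again), the whole running time is $\le B(K(S)+O(1))$, so this program places a dot of $\mathcal L_x$ at $\big(K(S)+O(1),\,K(S)+\log|S|+|\alpha|\big)$; by rightwards closure $\big(K(S),K(S)+\log|S|+|\alpha|\big)\in\mathcal L_x$ up to $O(1)$. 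Ranging over all models $S\ni x$ gives $\Lambda_x\subseteq O(1)\text{-neighbourhood}(\mathcal L_x)$.

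\emph{Fast programs give models} (the inclusion $\mathcal L_x\subseteq O(\log n)\text{-neighbourhood}(\Lambda_x)$, hence $\lambda_x(i)\llog\ell_x(i)$) --- the substantive direction. Fix $i$, set $\ell:=\ell_x(i)=K^{B(i)}(x)$, and let $p$ be a shortest program with $\U(p)=x$ and $\rt(p)\le B(i)$, so $|p|=\ell$. One must exhibit a finite $S\ni x$ with $K(S)\le i+O(\log n)$ \emph{and} $K(S)+\log|S|\le \ell+O(\log n)$, equivalently $\log|S|\le \ell-i+O(\log n)$. The plan is to build $S$ from short programs, exploiting the reinterpretation that a time bound of busy-beaver magnitude $B(i)$ is \emph{equivalent data} to the $i$-bit string $\om i$ (or to $\halting{\le i}$): the obvious set $\{y:K^{B(i)}(y)\le\ell\}$ is computable from $\om i$ and $\ell$, so $K(S)\le i+O(\log n)$, but it is too large ($\sim 2^{\ell}$ elements); cutting it down to the outputs of $\le\ell$-bit programs that extend a fixed $i$-bit prefix of $p$ gives the right cardinality $\sim 2^{\ell-i}$, but now one seems to pay for the prefix \emph{and} for $\om i$, i.e.\ $\sim 2i$ bits. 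I would resolve this --- extracting the localization that shrinks the set to size $2^{\ell-i}$ from the \emph{same} $i$ bits that supply the time bound --- by routing through Kolmogorov's structure function $h_x$: combine Vereshchagin--Vit\'anyi's affine equivalence of $\lambda_x$ with $h_x$ (recalled in the discussion of robustness of the method) with a time-bounded reading of $h_x(i)$ as, to within $O(\log n)$, the log-cardinality of a set of strings selected by their time-$B(i)$ algorithmic probability --- a set computable from $\om i$, whose size is controlled by the facts that these probabilities sum to at most $\Omega\le 1$ and that $x$'s own weight is at least $2^{-\ell}$, and trimmed with the help of the first $i$ bits of the halting information.

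\emph{Where the difficulty lies.} Everything outside the last direction is routine bookkeeping. The obstacle is exactly the simultaneous control of the two coordinates of the model's dot: one must keep $K(S)$ near $i$ while forcing $|S|$ down to roughly $2^{\ell-i}$, and since the busy-beaver time bound $B(i)$ already carries about $i$ bits of information, the reduction to size $2^{\ell-i}$ cannot be purchased on top of it --- it has to be read off the halting information one already holds. Making this precise, with only logarithmic slack in both coordinates of the profile, is the technical core and is where I expect the arguments of Refs.~\cite{antunes2017sophistication,bauwens2010computability} to do their real work.
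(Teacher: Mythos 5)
Your easy direction (models give fast programs) is correct and agrees in spirit with the paper's treatment: the paper proves the statement via a middleman $\chaitin_x$-profile (Corollary~\ref{thmlslam}, built from Propositions~\ref{propcor1} and~\ref{thmlslamcond} specialized to $z=\varepsilon$), and your observation that the second part of a two-part code is fast is exactly the paper's. For the hard direction, however, you correctly identify the obstacle but explicitly leave it unresolved and defer to the cited references, so the proposal has a genuine gap.

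You see that $\{y \st K^{B(i)}(y)\le\ell\}$ is computable from $\finom{i}$ but too big, and that the prefix-of-$p$ fix gives the right cardinality but appears to cost $\sim 2i$ bits. The missing idea is to make the \emph{same} $i$ bits --- the prefix $\finom{i}$ --- do double duty: compute the set \emph{and} supply a Kraft-type mass bound controlling its size. In the paper (Proposition~\ref{thmlslamcond}, $z=\varepsilon$), one sets $\theta(r)$ to be the number of bits of $\Omega$ stabilized, in a dovetailed enumeration, by the time $r$ halts, and takes as model
\[
S = \{\U(r) \st |r|=\ell,\ \theta(r)=i\}\,.
\]
This is computable from $\finom i$ and $\ell$, so $K(S)\llog i$. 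For the cardinality: once $i$ bits of $\Omega$ have stabilized, the total mass of halting programs not yet accounted for is at most $2^{-i}$, so $l\mapsto |\{r\st|r|=l,\ \theta(r)\ge i\}|\,2^{-l+i}$ is a lower semicomputable semimeasure relative to $\finom i$, and the coding theorem gives $\log|S| \lun \ell - i - K(\ell\given \finom i, K(\finom i))$. Therefore $K(S)+\log|S| \lun K(\finom i) + \ell - i \llog \ell$, as required; one also needs $\mathcal L_x \elog \chaitin_x$, which the paper derives from $\tau(p)\llog\theta(p)\le\tau(p)$ in Proposition~\ref{propcor1}. The conceptual step you were circling but did not land on is the coding theorem applied to a semimeasure of slow programs \emph{relative to $\finom i$}: it gives the $2^{\ell-i}$ bound on $|S|$ for free once you have $\finom i$, so you charge $i$ bits once, not twice.
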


\subsection{Definitions Relativized}

The previous definitions capture properties of a fixed bit string~$x$. The same definitions also hold if one reads~$x$ as an encoding~\mbox{$\langle x',y' \rangle$} of a pair of strings.
The main intention of this paper is to study the properties of description and time profiles when they are relativized\footnote{In this paper, I use ``relativized by'' in the same sense as ``conditional to''.} by some auxiliary information~$z$. Here, I straightforwardly extend the definitions to a conditional counterpart.

The \emph{conditional description and time profiles} are respectively
\beas
\Lambda_{y\giv z} &=& \{(i,\lambda) \st \exists S\ni y~~ i \leq K(S \given z) \et K(S \given z) + \log|S| + \len\alpha \leq \lambda \} \\
\mathcal{L}_{y \giv z} &=& \{(i,\ell) \st \exists p~~ \U(p,z) = x \,,~ |p| \leq \ell \et \brt(p,z) \leq i  \} \,.
\eeas
%
%
Sophistication, busy beaver depth and time-bounded complexity also have a straightforward conditional analogues:
\beas
\soph_c(y\given z) &\deq& \min_{S \ni y} \{K(S\given z) \st K(S\given z) + \log|S| + |\alpha| \leq K(y \given z) + c \}\,, \\
\bdepth_c(y \given z) &\deq& \min_{p\colon \U(p,z)=y}~\{\brt(p,z) \st |p| \leq K(y\given z) + c \}\,, \\
K^{\bb(i)}(y\given z) &\deq&  \min_{p \st \U(p,z) = y}~\{|p| \st \rt(p,z) \leq \bb(i) \} \,.
\eeas
One can again translate profiles
\bes
\Dprof_{y \giv z} \deq \Lambda_{y \giv z} - (0, K(y \given z)) \disand \mathcal D_{y \giv z} \deq \mathcal L_{y \giv z} - (0, K(y \given z))
\ees
and verify that the definitions are consistent with 
\beas
\graphy(\Dprof_{y\giv z}) &=& (\soph_c(y\given z), c) \\
\graphy(\dprof_{y\giv z}) &=& (\bdepth_c(y\given z), c) \\
\graphx(\mathcal L_{y\giv z}) &=& (i, K^{\bb(i)}(y\given z)) \\
\graphx(\Lambda_{y\giv z}) &=& (i, \lambda_{y\giv z}(i)) \,.
\eeas
%


\section{Chain Rules for Profiles} \label{sec:crtp}

One of the most important relations in AIT is the chain rule, eq.~\eqref{eqcr}, for algorithmic complexity. Without it, the ``IT'' in ``AIT'' would be a misnomer, because like in Shannon's theory of information, the chain rule is precisely what entails a symmetric measure of information. A chain rule for time and description profiles would make it possible to express depth and sophistication of a pairs in terms of their single string and conditional version.
%

\subsection{A Chain Rule for Time Profiles}
In this section, I show that the chain rule is carried over by the time profiles within logarithmic resolution, namely, that the following holds
\be \label{eqsickrel}
\mathcal L_{x,y} \elog \mathcal L_x + \mathcal L_{y | x} \,.
\ee
The above relation accounts for two ``profile inequalities''. Each of which is treated independently in Proposition~\ref{propcotefacile} and Proposition~\ref{prophardside}, because they hold with different error bounds.
%
 In both propositions, the strategy is the same: I follow the lines of Longpr\'e's analysis~\cite{longpre1986resource} of the chain rule for time-bounded complexity, but where time is measured in the busy beaver scale and where programs are required to be self-delimited.
\begin{propA}\label{propcotefacile}
For all strings~$x$ and~$y$ of length $\leq n$ and for all~$i\geq\bbinv(n)$, 
\bes
K^{\bb(i')}(x,y) \lun K^{\bb(i)}(x) + K^{\bb(i)}(y\given x),
\ees
where~$i' = i + O(1)$.
\end{propA}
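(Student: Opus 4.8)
The plan is to exhibit an explicit self-delimiting program for $\langle x,y\rangle$ by concatenating near-optimal time-bounded programs for $x$ and for $y$ given $x$, and then bound both its length and its busy running time. Let $i \geq \bbinv(n)$. Pick $p$ a shortest program with $\U(p)=x$ and $\rt(p)\leq \bb(i)$, so $\len p = K^{\bb(i)}(x)$, and pick $q$ a shortest program with $\U(q,x)=y$ and $\rt(q,x)\leq \bb(i)$, so $\len q = K^{\bb(i)}(y\given x)$. The candidate program for $\langle x,y\rangle$ is $\beta p q$, where $\beta$ is a fixed $O(1)$ prefix that runs $p$ (reading exactly $\len p$ bits off the input, which is possible because $\U$ is self-delimiting), records the output $x$, then uses $x$ as auxiliary information to run $q$ on the remaining input bits to obtain $y$, and finally outputs $\langle x,y\rangle$. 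Self-delimitation of $\U$ guarantees that $\beta$ knows when $p$ ends and $q$ ends, so no extra delimiters are needed. Its length is $\len\beta + K^{\bb(i)}(x) + K^{\bb(i)}(y\given x) = K^{\bb(i)}(x) + K^{\bb(i)}(y\given x) + O(1)$, which gives the length bound in the statement with the implicit $O(1)$.

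The remaining task is to bound the running time of $\beta p q$. Running $p$ costs $\rt(p)\leq \bb(i)$ steps; copying $x$ costs $O(\len x) = O(n)$ steps; running $q$ with auxiliary information $x$ costs $\rt(q,x)\leq \bb(i)$ steps; assembling $\langle x,y\rangle$ costs another $O(n)$ steps; and there is an $O(1)$ overhead for $\beta$'s bookkeeping. So the total running time is at most $2\bb(i) + O(n) + O(1)$. To translate this into the busy beaver scale I need $2\bb(i)+O(n)+O(1) \leq \bb(i')$ for some $i' = i+O(1)$. Since $\bb$ grows faster than any computable function, $2\bb(i) + cn + c' \leq \bb(i + O(1))$ holds once $\bb(i) \geq n$, and the hypothesis $i \geq \bbinv(n)$ gives exactly $\bb(i) \geq n$ (by definition of $\bbinv$ there is a halting program of length $i$ running at least $n$ steps, hence $\bb(i)\geq n$). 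Concretely, a program that, on input the description of such an $i'$, simulates $\beta p q$ can be made to run $\geq 2\bb(i) + O(n)$ steps with a description of length $i + O(1)$, forcing $\bb(i+O(1)) \geq 2\bb(i)+O(n)+O(1)$; hence $\brt(\beta p q) \leq i + O(1) =: i'$, and therefore $K^{\bb(i')}(\langle x,y\rangle) \leq \len{\beta p q} \lun K^{\bb(i)}(x) + K^{\bb(i)}(y\given x)$.

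The main obstacle is the bookkeeping around the busy beaver renormalization — namely making precise that "$2\bb(i) + O(n)$ steps can be achieved within a program of length $i + O(1)$" so that it lands inside $\bb(i')$ with $i' = i + O(1)$. This is where the hypothesis $i \geq \bbinv(n)$ is essential: it is what keeps the additive $O(n)$ from inflating the constant in $i' - i$, since without $\bb(i) \geq n$ the term $n$ need not be absorbable by a constant shift of the argument. Everything else — the construction of $\beta$, the use of self-delimitation to avoid delimiters, and the additive $O(1)$ in the length — is routine and parallels Longpré's treatment of the time-bounded chain rule, with the only genuine new ingredient being the passage to the $\bbinv$ scale.
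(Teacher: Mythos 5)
Your proposal is correct and follows exactly the same route as the paper's proof: concatenate time-bounded witness programs for $x$ and for $y$ given $x$ behind a constant-size routine, bound the length by the sum plus $O(1)$, and bound the running time by $2\bb(i)+O(n)\leq\bb(i+O(1))$, using $i\geq\bbinv(n)$ to absorb the $O(n)$ overhead. The paper states the last step more tersely as $2\bb(i)+O(n)\lun O(\bb(i))\lun \bb(i+O(1))$, but the justification you spell out is the intended one.
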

%
\begin{proof}
Let~$p$ and~$q$ be the respective witnesses of~$K^{\bb(i)}(x)$ and $K^{\bb(i)}(y\given x)$. Then $rpq$ is a self-delimiting program for $\langle x, y \rangle$, where $r$ is a constant-size routine that implements the following.
First, $p$ is executed, producing $x$, which is copied before being given as a ressource to~$q$. Thereupon,~$q$ is executed, yielding~$y$,  
and the pair $\langle x, y \rangle$ is computed. 
The running time of the whole computation is $2\bb(i) + O(n) \lun O(\bb(i)) \lun \bb(i + O(1))$.
\end{proof}

\begin{propA} \label{prophardside}
For all strings~$x$ and~$y$ of length $\leq n$ and for all~$i\geq\bbinv(n)$, 
\bes
K^{\bb(i')}(x) + K^{\bb(i')}(y \given x) \lun K^{\bb(i)}(x,y) + 2 K^{\bb(i)}(m,l) \,,
\ees
where $i' = i + O(1)$, $K^{\bb(i)}(x,y)=m$ and $l \leq m$ is to be determined.
\end{propA}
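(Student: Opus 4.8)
The plan is to adapt Longpr\'e's proof of the chain rule for time-bounded complexity to the busy-beaver scale, as announced before the statement. Let $r$ witness $K^{\bb(i)}(x,y)=m$, so $\U(r)=\langle x,y\rangle$, $\rt(r)\leq\bb(i)$ and $\len{r}=m$. From this one short, fast program I want to extract a short fast program $\pi_x$ for $x$ and a short fast program $\pi_{y\giv x}$ for $y$ relative to $x$, whose lengths add up to about $m$. The splitting is governed by how ``spread out'' $x$ is among the short fast programs: fixing the time budget $\bb(i)$, call a self-delimiting string $p$ \emph{admissible} if $\len{p}\leq m$, $\rt(p)\leq\bb(i)$ and $\U(p)$ is a pair; by prefix-freeness there are at most $2^m$ admissible strings. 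For a string $u$ let $c_u$ be the number of distinct $v$ such that some admissible $p$ computes $\langle u,v\rangle$; then $\sum_u c_u\leq 2^m$ and $c_x\geq 1$ because $r$ is admissible. Put $l\deq\lceil\log c_x\rceil$, which satisfies $l\leq m$ — this is the parameter ``to be determined'' — and observe that $y$ belongs to the set of $c_x\leq 2^l$ second-components attached to $x$.

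Both programs execute one fixed fair round-robin dovetailing of all bit strings of length $\leq m$ as inputs to $\U$, and they halt according to a purely combinatorial stopping rule, so that — importantly — neither program ever needs to compute the astronomical number $\bb(i)$. Program $\pi_{y\giv x}$ receives $x$ as conditional input and has $(m,l)$ hard-coded; it watches for outputs of the form $\langle x,v\rangle$ and halts returning the $j'$-th distinct such $v$ to surface, where $j'$ is the position of $y$ in that order of appearance. Because $r$ is admissible, $y$ surfaces by dovetailing time $O(2^m\bb(i))$, and up to that moment only admissible strings have produced any output, so at most $c_x$ second-components attached to $x$ have surfaced; hence $j'\leq c_x\leq 2^l$, costing $l+O(1)$ bits. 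Program $\pi_x$ has $(m,l)$ hard-coded; for every first-component $u$ it sees it records the set $V_u$ of distinct second-components seen so far, declares $u$ \emph{activated} the first time $|V_u|$ reaches $2^{l-1}$, and halts returning the $j$-th string to be activated, where $j$ is the position of $x$. Since distinct activated strings consume pairwise disjoint blocks of at least $2^{l-1}$ of the $\leq 2^m$ admissible strings, at most $2^{m-l+O(1)}$ strings ever get activated, so $j$ costs $m-l+O(1)$ bits; and $x$ does get activated because $c_x>2^{l-1}$. Packaging each program as a self-delimiting concatenation of an $O(1)$ routine, a shortest $\bb(i)$-fast program for $\langle m,l\rangle$, and the relevant index (whose length is read off from $m$ and $l$), the two lengths add up to $(m-l)+l+2K^{\bb(i)}(m,l)+O(1)=m+2K^{\bb(i)}(m,l)+O(1)$; the factor $2$ appears because the two programs cannot share the description of $(m,l)$.

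It remains to bound the running times, and this is the delicate step. Both $\pi_x$ and $\pi_{y\giv x}$ stop by dovetailing time $O(2^m\bb(i))$, hence within actual time $2^m\bb(i)\cdot\mathrm{poly}(\bb(i))$, and I must check this is at most $\bb(i+O(1))$. Here the two hypotheses of the proposition finally come in: $\len{x},\len{y}\leq n$ forces $m\leq 2n+O(\log n)$, while $i\geq\bbinv(n)$ forces $\bb(i)\geq\bb(\bbinv(n))\geq n$, so the bound is $2^{O(\bb(i))}$. Finally, $2^{O(\bb(i))}\leq\bb(i+O(1))$ is a self-contained property of the busy beaver: run a length-$\leq i$ program of maximal running time while counting its steps to obtain the number $C=\bb(i)$, then have an $O(1)$-bit routine count from $0$ to $2^{kC}$; this is a halting program of length $i+O(1)$ running for at least $2^{k\bb(i)}$ steps, so $\bbinv(2^{k\bb(i)})\leq i+O(1)$ for any constant $k$. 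Consequently $\brt(\pi_x),\brt(\pi_{y\giv x})\leq i+O(1)=i'$, and since $\pi_x$ computes $x$ and $\pi_{y\giv x}$ computes $y$ from $x$, this gives $K^{\bb(i')}(x)+K^{\bb(i')}(y\giv x)\leq m+2K^{\bb(i)}(m,l)+O(1)$, which is the claimed inequality. I expect this time accounting — not the combinatorics — to be the main obstacle, since it is where the strong ``$i'=i+O(1)$'' is earned and the only place the hypotheses $\len{x},\len{y}\leq n$ and $i\geq\bbinv(n)$ are used.
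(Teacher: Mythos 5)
Your proposal is correct and follows essentially the same route as the paper: you split $m$ into $l$ and $m-l$ via the abundance of second components attached to $x$ among $\bb(i)$-fast programs of length $\leq m$, index $y$ among the $\leq 2^l$ attached second components and $x$ among the $\leq 2^{m-l+O(1)}$ ``activated'' first components (the paper's set $B^i$), pay $K^{\bb(i)}(m,l)$ twice for the self-delimitation of the length split, and absorb the $2^{O(\bb(i))}$-time enumeration into $\bb(i+O(1))$ via the observation that $\bb(i+O(1)) \geq f(\bb(i))$ for computable $f$. Your account is in fact a little more explicit than the paper's about why $m = O(\bb(i))$, which is where the hypotheses $|x|,|y|\leq n$ and $i \geq \bbinv(n)$ actually do their work.
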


\begin{proof}
Define
\bes
A^i = \{ \langle x',y' \rangle \st K^{\bb(i)}(x',y') \leq m \} \qquad \et \qquad 
A_x^i = \{ y' \st K^{\bb(i)}(x,y') \leq m \} \,,
\ees
which contain $\langle x,y \rangle$ and $y$, respectively. 
A program for $y$ given $x$ is to enumerate $A_x^i$ and to give its enumeration number~$i^y$.
$A_x^i$ can be enumerated if $x$ and $m$ are known. Note that ${\bb(i)}$ is not required, since the enumeration can be done in a parallel fashion until the~$i^y$-th element has been enumerated. This takes a maximum of
\bes
O\left ( 2^{m+1} (1 + 2 + \dots + {\bb(i)}) \right )= O \left( 2^{m} {\bb(i)}^2 \right) 
\ees
steps of computation, namely, enough for each program of length $\leq m$ to be executed (parallel fashion) for ${\bb(i)}$ steps. 
 The exponential factor in~$O(2^m{\bb(i)}^2)$, which can be furthermore bounded by $\leq O(2^{O(\bb(i))}{\bb(i)}^2)$, seems like bad news (it would be for concrete computations). But when compared to~$\bb(i+O(1))$, it is safely ignored. 
 In fact, for any computable function~$f$,
\bes
\bb(i + O(1)) \geq f(\bb(i)) \,,
\ees
because the program 
of length~$i$ that runs for~$\bb(i)$ and the program 
of length~$O(1)$ that computes the function~$f(\cdot)$, can be merged into a program of length~$i + O(1)$ that runs for~$f(\bb(i))$. 

 Let $l \equiv \lceil \log |A_x^i| \rceil $ be the number of bits of~$i^y$. Self-delimitation of the program for~$y$ given $x$ imposes that $l$ must be known in advance. Hence, the program considered requires $K^{\bb(i)}(m,l)$ bits to compute $m$ and $l$ (in time $\leq {\bb(i)}$), and $l$ bits to give the enumeration number of $y$. Any $O(n)$ execution times are absorbed in~${\bb(i+O(1))}$, so 
\be \label{eq:dif1}
K^{\bb(i+O(1))} (y \given x) \lun l + K^{\bb(i)}(m,l) \,.
\ee

\noi Define now
\bes
B^i = \{ x' \st \log |A_{x'}^i| > l-1 \} \,,
\ees
which contains $x$. If $m$ and $l$ are given, $B^i$ can be enumerated by enumerating $A^i$ (thanks to $m$), and when  for a given~$x'$ the subset $A^i_{x'}$ contains more than $2^{l-1}$ elements, $x'$ is added to $B^i$. A possible program for $x$ is thus given by the enumeration number of $x$ in $B^i$. The enumeration of $A^i$ will be completed in time $\leq \bb(i+O(1))$, after which  $x$ is guaranteed to have appeared in the $B^i$ list. Note that
\bes
|A^i| = \sum_{x'} |A^i_{x'}| \geq \sum_{x'\in B^i} |A^i_{x'}| \geq \sum_{x'\in B^i} 2^{l-1} = |B^i| 2^{l-1}\,.
\ees
Since $|A^i| < 2^{m+1}$,
\bes
\log |B^i| \lun m - l  \,.
\ees
This time, the self-delimitation of the enumeration number of $x$ in the $B^i$ list comes for free, since $m-l$ is computed from $m$ and $l$. All together, this amounts to
\be \label{eq:dif2}
K^{\bb(i+O(1))}(x) \lun m-l +  K^{\bb(i)}(m,l)  \,.
\ee
\noi Recalling that $m=K^{\bb(i)}(x,y)$, summing \eqref{eq:dif1} and \eqref{eq:dif2} together yields what is to be shown.
\end{proof}

The~$X$-graph of $\mathcal{L}_x + \mathcal{L}_{y|x}$ represents the function~$K^{B(i)}(x) +  K^{B(i)}(y \given x)$, so by Remark~\ref{remclose}, Proposition~\ref{propcotefacile} implies that $\mathcal{L}_x + \mathcal{L}_{y|x}$ is in an~$O(1)$-neighbourhood of  $\mathcal L_{x,y}$ and Proposition~\ref{prophardside} implies that $\mathcal L_{x,y}$ is in an $O(\log n)$-neighbourhood of $\mathcal{L}_x + \mathcal{L}_{y|x}$. 
Putting this together, one has
\bes
\mathcal L_{x,y} \elog \mathcal L_x + \mathcal L_{y | x} \,.
\ees

\subsection{Not for Description Profiles: The Antistochastic Counter-Example} \label{secfirstanti}

In the light of the equivalence between unrelativized  description and time profiles, Theorem~\ref{thmsophdepth}, it seems that a chain rule analogous to Eq.~\eqref{eqsickrel} should also hold for description profiles. In fact,
\bes
\mathcal L_x  \elog \Lambda_x  \disand  \mathcal L_{x,y}  \elog \Lambda_{x,y} \,, 
\ees
so $\mathcal L_{x\giv y}  \elog \Lambda_{x \giv y} $ holds if and only if $\Lambda_x + \Lambda_{y|x} \elog \Lambda_{x,y} $ holds. But it turns out that these relations are false in general.

Consider the following counterexample. A string~$z$ is called~\emph{antistochastic} if its description profile contains as few elements as possible. More precisely, if~$\len z = n$ and~$K(z) = k$,~$z$ is~$\varepsilon$-antistochastic if~\mbox{$(k-\varepsilon, n-\varepsilon) \notin \Lambda_z$}. 
``All shapes are possible'', Theorem~\ref{thmallshapes}, implies that there exist $O(\log n)$-antistochastic strings.
Within a logarithmic precision, a profile as such is essentially generated by two points, namely,~$(0 , n)$ and~$(k,k)$.
From the description profile perspective, these generators are witnessed by the models~$\{0,1\}^n$ and $\{z\}$, respectively, while from the time perspective, those points come from the programs~``$\mathtt{Print}$~$z$'' and~$z^*$, respectively. See Figure~\ref{figantistoch}.

Antistochastic strings are quite strange: Every model that singles out properties of~$z$ in a more constraining way than just giving raw bits of~$z$ necessarily has complexity~$\geq k$, and every program that computes~$z$ faster than~$\bb(k)$ is as long as the length of~$z$.
Even more impressive, Milovanov~\cite{milovanov2017some} has shown that antistochastic strings have a remarkable holographic property: If~\emph{any}~$n-k$ bits of~$z$ get erased, yielding for instance
\bes
z' = 00*1*01**10011*00*1*111\dots 0\,,
\ees
where the ``$*$'' symbol represents the erased bits, then the original string can be recovered from the erased one by a logarithmic advice, \emph{i.e.},~\mbox{$K(z \given z') = O(\log n)$}.

Let~$z$ be such an~$O(\log n)$-antistochastic string of length~$n$ and complexity~$k$, with \mbox{$n/2 < k < n$}. Let~$z=xy$, where the pieces~$x$ and~$y$ are chosen in such a way that each of them is insufficient to perform Milovanov's holographic reconstruction,~\emph{i.e.}, $\len x < k$ and $\len y < k$.
Technically, $xy$ does not correspond to a proper encoding of the pair $\langle x, y \rangle$ because it is not uniquely decodable, but $1^{||x||}0|x|xy$ is, where~$||x||$ denotes the length of~$\len x$. This discussion holds to logarithmic precision, so the prefix $1^{||x||}0|x|$ can be disregarded, and the profile $\mathcal L_{\langle x, y \rangle}$ is identified to that of $\mathcal L_{xy}$.


\begin{figure}
	\begin{center}
		\includegraphics[
		trim={2cm 14cm 13cm 4cm},clip, width=8cm]{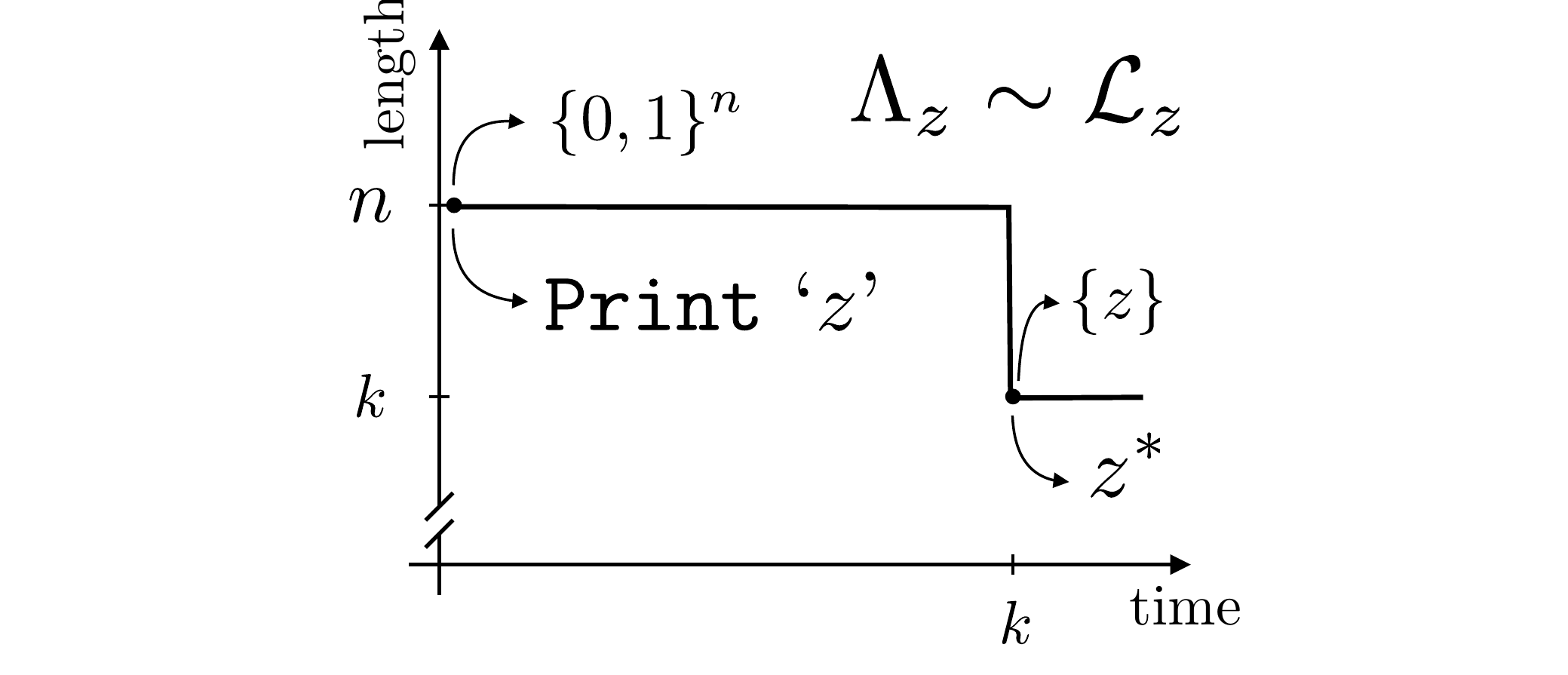}
	\end{center}
\caption{An antistochastic string understood from the description and time profiles perspectives.}
\label{figantistoch}
\end{figure}

Observe that~$x$ is incompressible,~$K(x) \elog \len x$. Otherwise, the set~\mbox{$\{xw \st w \in \{0,1\}^{\len y}\}$} would have complexity smaller than~$\len x$ and with its~$\log$-cardinality of $\len y$, it would entail a two-part description smaller than~$|x|+|y| = n$, contradicting the description profile. This means that, as any incompressible string, $\mathcal L_x$ lies just above the horizontal line of height $\len x$. 

To determine $\Lambda_{y\giv x}$, Milovanov's property implies that
\bes
K(\{y\}\given x) \elog K(y\given x) \elog K(xy\given x) \elog  k - \len x \,,
\ees
and hence~$( k - \len x,  k - \len x) \in \Lambda_{y\giv x}$ (again,  disregarding logarithmic precision). This point happens just after a drop since $(k-|x|-\varepsilon, |y| - \varepsilon) \notin \Lambda_{y\giv x}$, for some $\varepsilon = O(\log n)$. Indeed, if a program of length $k-|x|-\varepsilon$ would, from $x$, specify a model $S\ni y$ of log-cardinality $|y|-\varepsilon - k + |x| + \varepsilon = n-k$, then~$\{xs \st s \in S\} \ni z$ would contradict $z$'s description profile, because it would be of unconditional complexity $k-\varepsilon$, 
 for a two-part description of length $n-\varepsilon$.
\begin{figure}
	\begin{center}
		\includegraphics[
		trim={6cm 9,5cm 18cm 5cm},clip, width=6cm]{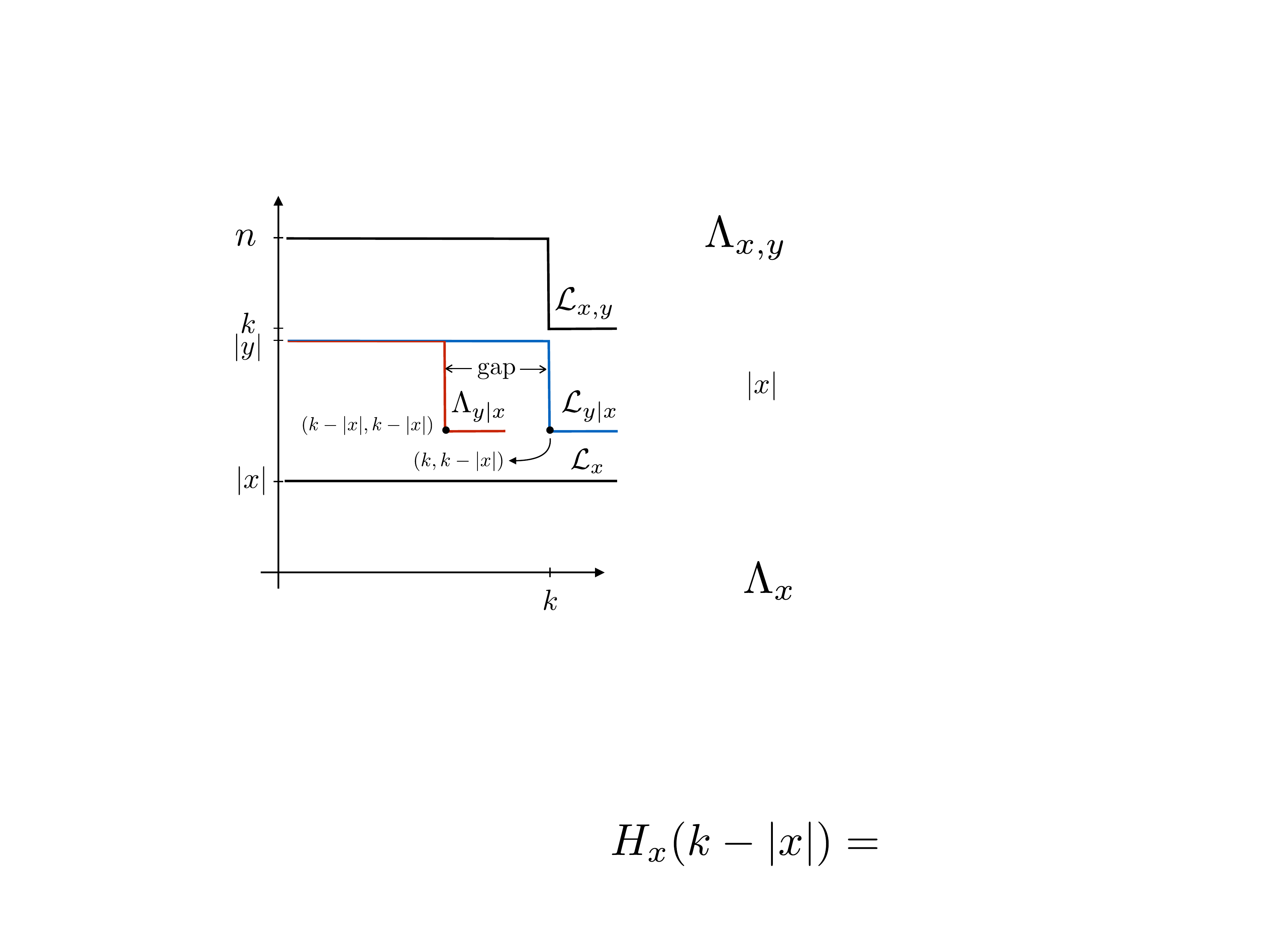}
		\caption{A gap between the conditional profiles~$\mathcal L_{y \giv x}$ and $\Lambda_{y \giv x}$ of pieces~$x$ and~$y$ of anti-stochastic strings shows that the chain rule $\mathcal L_{x,y} \elog \mathcal L_x + \mathcal L_{y \giv x}$ does not find an analogue for description profiles~$\Lambda$.}\label{figdiffprof}
	\end{center} 
\end{figure}

The result of the previous section, Eq.~\eqref{eqsickrel}, suffices to establish $\mathcal L_{y \giv x}$ as $\mathcal L_{x, y} - \mathcal L_{x}$. But for completeness, I argue it directly.
It is straightforward to see that~$(k,  k - \len x) \in \mathcal L_{y\giv x}$: the busy running time of Milovanov's reconstruction cannot exceed the length of the program $(k-\len x)$ plus the length of the auxiliary information $(\len x)$, otherwise, it would solve too big a halting problem from too few bits. Moreover, $(k-\varepsilon, |y| - \varepsilon) \notin \mathcal L_{y\giv x}$, for some $\varepsilon = O(\log n)$ large enough.
Because suppose it is: A program for~$y$ given~$x$ is then of length~$|y| - \varepsilon$ and runs for~$\bb( k - \varepsilon)$ or less steps. A program of length $|x| + |y| - \varepsilon$ for~$z$ is then the~``$\mathtt{Print}$~$x$'', followed by the aforementioned program of~$y$ given~$x$. The overall running time of such program is~$\bb(k - \varepsilon)$, hence contradicting the depth of~$z$: Any program shorter than $n$ that computes $z$ must run for at least~$\bb(k)$. 
%
See Figure~\ref{figdiffprof}.

\smallskip
Answers typically raise more questions: Since~$\Lambda_{y\giv x}$ and~$\mathcal L_{y\giv x}$ do not coincide, what is the gap between them? As a first indicator coming from the previous example, one notices that the first coordinate of the conditional description profile is the conditional complexity (\emph{e.g.}~of value $k-|x|$) of the model, so the length of a program. However, the first coordinate of the conditional time profile is the busy running time (\emph{e.g.}~of value $k$) of a program, which~\emph{could be longer than its length} when auxiliary information is provided.

\section{The Gap} \label{secgap}

On the journey towards expressing sophistication and depth of pairs,
a detour is required to understand --- and quantify --- what separates $\Lambda_{y\giv x}$ from~$\mathcal L_{y\giv x}$.
%
A first step is to understand why the profiles connect in the non conditional case, and then underline what introduces a gap when relativized. To do so, I revisit the non conditional case by introducing another profile ``between'' $\mathcal L$ and $\Lambda$, which renders their link ``more continuous'', thus enlightening why they equate. This new profile is then relativized, allowing us to grasp what causes the gap.

\subsection{A Man in the Middle}
Levin~\cite{levinprivate, vv2002} noticed long ago that strings with description profiles that reach~$K(x)$ for large values of complexity threshold must contain mutual information with the halting problem. This is clear when such a profile is understood by its equivalent time profile, which displays programs that run for so long~($\bb(i)$ steps!)\ that they can decide the halting problem for all programs shorter than~$i$. 

The following profile makes the connection with halting information even clearer. The main idea underlying its construction finds its roots in the proof by G\'acs~\cite{gacs1974symmetry} that some strings~$x$ have high~$K(K(x)\given x)$. The concept is further investigated by Bauwens~\cite{bauwens2010computability} and named \emph{$m$-sophistication}\footnote{
Bauwens was well aware of the connection between sophistication and depth. In fact, in an earlier preprint~\cite{bauwens2009equivalence}, he named the concept \emph{$m$-depth}. This itself is a nice wink to the ``in between profiles'' that is being considered here. 
 }. For the aware reader, here, I restrict the universal semi-measure ``$m$'' to the \emph{a priori} probability, and I put in evidence that the quantity is a function of a significance parameter, hence defining a full-fledged profile.

 To define the \emph{$\chaitin_x$-profile}, consider a dovetailed enumeration of all programs,
in which each program of length $j$, lexicographically, is simulated during $j$ steps of computation, for increasing values of~$j$. Each such iteration refers to a ``\emph{j-step}''. When a program of length~$l$ halts, $2^{-l}$ is added to a sum $M$ initially valued at $0$. Note that this process lower semi-computes~$\Omega$, so as the enumeration goes, an increasing prefix of~$M$ stabilizes to some prefix of~$\Omega$. Whenever $x$ is produced by some program~$p$, the current $j$-step is completed, and a 
dot is marked at the coordinate $(\theta(p), |p|)$, where $\theta(p)$ is the length of the largest prefix of $\Omega$ that has stabilized in the sum $M$, \emph{i.e.}, 
\bes
M_{[\theta]} = \Omega_{[\theta]} \qquad \text{but} \qquad M_{[\theta+1]} \neq \Omega_{[\theta + 1]} \,.
\ees
I refer to~$\theta(p)$, as the \emph{time on the~$\Omega$ clock} and the $\chaitin_x$-profile is defined
as the upwards and rightwards closure of the dots. 
%

Note that~$\theta(p)$ achieves the same purpose as the busy beaver renormalization~$\tau(p)$, that is, it measures the running time of~$p$ in a economical representation. In fact, the two quantities are very close to one another. Indeed,
the busy beaver is friends with a badger, who is also very busy. 
\begin{deffA}
The \emph{\namecbb~function} $\cbb(i)$ is defined as the value of the~$j$ index in the dovetailed enumeration when the first~$i$ bits of~$M$ get stabilized to~$\finite{\Omega}{i}$. This can be written as
\bes
\cbb(i) \deq \min \{j \st M(j) = \finite{\Omega}{i} \beta\} \qquad \text{(for some $\beta$)}\,,
\ees
where~$M(j)$ denotes the value of the sum just before incrementing~$j$ to~$j+1$.
\end{deffA}

 The beaver and the badger can be shown to be almost as busy as one another,
 precisely, that
\be \label{eqcousins}
\bb(i)\leq \cbb(i) \leq \bb(i + K(i) + O(1)) \,.
\ee
To see the first relation, let~$p$ be the slowest halting $i$-bit program, witnessing~$\bb(i)$. In the dovetailed enumeration, when~$p$ halts, the counter~$j$ has value~$\bb(i)$. Before it is supplemented by the contribution~$2^{-i}$, the sum~$M$ cannot have stabilized as large a prefix as $\finom{i}$, otherwise, $M+2^{-i}$ would overshoot the value of $\Omega$.
%
%
The second relation comes from that a program hardcoded with~$\finom i$ can execute the dovetailed enumeration and purposefully halt once $i$ bits of $\Omega$ have stabilized. Such a program has a running time larger than~$\cbb(i)$, but smaller than $\bb(i + K(i) + O(1))$, because it is of length~$\lun K(\finom{i}) \lun i + K(i)$. 
%

By definition, the time of a program~$p$ measured on the~$\Omega$ clock, is the inverse busy badger of its running time, \emph{i.e.}, $\theta(p) = \cbbinv(\rt(p))$. Recalling that $\brt(p) = \bbinv(\rt(p))$, inverting the relation~\eqref{eqcousins} yields
\be \label{eqthetatau}
\tau(p) \llog \theta(p) \leq \tau(p) \,.
\ee

This connection between $\tau$ and $\theta$ establishes the first relation of the following statement, which is in its whole a corollary of the upcoming Propositions~\ref{propcor1} and~\ref{thmlslamcond}. It states that the $\chaitin$-profile can indeed be considered as \emph{a man in the middle} between the $\mathcal L$-profile and the~$\Lambda$-profile.
\begin{corA} \label{thmlslam}
For all~$x$,
\be \label{eqlslam}
\mathcal L_x \elog \chaitin_x \elog 
 \Lambda_x \,.
\ee
\end{corA}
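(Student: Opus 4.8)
The plan is to split Eq.~\eqref{eqlslam} into the two $\elog$-equivalences $\mathcal L_x\elog\chaitin_x$ and $\chaitin_x\elog\Lambda_x$, and to obtain each as the $z=\epsilon$ instance of the (conditional) Propositions~\ref{propcor1} and~\ref{thmlslamcond}; below I sketch the unconditional arguments, which carry the ideas. For $\mathcal L_x\elog\chaitin_x$: both profiles are the upward/rightward closures of dot-families indexed by the programs $p$ with $\U(p)=x$, at $(\tau(p),|p|)$ and at $(\theta(p),|p|)$ respectively --- identical ordinates, and by Eq.~\eqref{eqthetatau} the abscissae obey $\theta(p)\le\tau(p)\le\theta(p)+O(\log n)$ on the relevant range $|p|=O(n)$. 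Since $\chaitin_x$ is rightward closed it already contains every $\mathcal L_x$-dot, and conversely each $\chaitin_x$-dot sits $O(\log n)$ to the left of an $\mathcal L_x$-dot; Remark~\ref{remclose} then gives $\mathcal L_x\elog\chaitin_x$. (In the unconditional case this, together with Theorem~\ref{thmsophdepth} and the fact that $\elog$-closeness composes up to constants, already yields $\chaitin_x\elog\Lambda_x$; but I would prove the latter directly, since that is what survives relativization.)

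For $\chaitin_x\elog\Lambda_x$, by Remark~\ref{remclose}(i) it suffices to place every generator of one profile inside an $O(\log n)$-ball of the other. For the direction $\Lambda_x\to\chaitin_x$: a generator $(K(S),K(S)+\log|S|+|\alpha|)$ of $\Lambda_x$ is witnessed by the two-part program $D(S^*,i^x_S)$ of exactly that length; since $S^*$ halts within $\bb(K(S))$ steps, $\rt(D)\le\bb(K(S))+\mathrm{poly}$, whence $\theta(D)=\cbbinv(\rt(D))\le\bbinv(\rt(D))\le K(S)+O(1)$, using $\bb\le\cbb$ from Eq.~\eqref{eqcousins} together with the fact (already exploited in Proposition~\ref{prophardside}) that $\bb(i+O(1))$ dominates any fixed computable function of $\bb(i)$; so the $\chaitin_x$-dot $(\theta(D),|D|)$ is $O(1)$-close to the given generator. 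For the harder direction $\chaitin_x\to\Lambda_x$: given a generator coming from $p$ with $\U(p)=x$, $|p|=\ell$, $\theta(p)=\theta$, I would take $S$ to be the set of strings produced, in the dovetailed enumeration, by some program of length $\le\ell$ whose running time lands in the $\Omega$-clock window $(\cbb(\theta-1),\cbb(\theta)]$. Then $x\in S$, since $\theta(p)=\theta$ means exactly $\cbb(\theta-1)<\rt(p)\le\cbb(\theta)$; the set $S$ is computable once $\finom{\theta}$ and $\ell$ are known (the former locating both window endpoints), so $K(S)\lesssim\theta+O(\log n)$; and the total a priori weight of the programs gathered is at most $\Omega-M(\cbb(\theta-1))<2^{-(\theta-1)}$, so, each such program weighing at least $2^{-\ell}$, $|S|<2^{\ell-\theta+1}$, giving $\log|S|\lesssim\ell-\theta$. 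Hence the $\Lambda_x$-dot $(K(S),K(S)+\log|S|+|\alpha|)$ lies within $O(\log n)$ of $(\theta,\ell)$. Chaining the four inclusions yields $\mathcal L_x\elog\chaitin_x\elog\Lambda_x$.

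The main obstacle is the construction $\chaitin_x\to\Lambda_x$. One has to hit on the right model --- the outputs of programs halting inside a \emph{narrow $\Omega$-clock window}, not the naive ``all strings computable in $\ell$ bits and $\cbb(\theta)$ steps'', whose cardinality is a useless $\sim 2^{\ell}$ --- and then extract the essential bound $\log|S|\lesssim\ell-\theta$ from the fact that $\Omega$'s tail mass beyond bit $\theta-1$ is below $2^{-(\theta-1)}$. The other piece of care is to turn ``$\finom{\theta}$ has stabilized'' into a bona fide halting instruction, so that $S$ is an honest finite computable set that still contains $x$; this is also exactly the point where, in the conditional Proposition~\ref{thmlslamcond}, placing $z$ on the work tape lets $z$ compress halting information and thereby opens the gap analysed in \S\ref{secgap}. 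Everything else --- Remark~\ref{remclose}, composition of $\elog$-closeness, the $\bb$-versus-$\cbb$ comparison --- is routine.
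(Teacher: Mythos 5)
Your proposal is correct and takes essentially the same route as the paper: the corollary is obtained there, as you state up front, by instantiating Propositions~\ref{propcor1} and~\ref{thmlslamcond} at $z=\epsilon$ together with Remark~\ref{remallthesame}, and your sketched unconditional arguments faithfully track the proofs of those propositions (two-part codes are fast in $\theta$-time for the $\Lambda_x\to\chaitin_x$ direction; a model built from programs in a fixed $\Omega$-clock stratum, with cardinality controlled by the tail mass of $\Omega$, for the other). The only departure is cosmetic: you bound $|S|$ by a direct mass count rather than by the coding-theorem semi-measure argument of Proposition~\ref{thmlslamcond} (and you collect programs of length $\le\ell$ in a clock window rather than exactly $\ell$ at exactly clock value $\theta$, which shifts an off-by-one in the window endpoints), but both give the needed $\log|S|\llog\ell-\theta$ and $K(S)\llog\theta$, so the discrepancy is absorbed in the $O(\log n)$ slack.
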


\subsubsection*{A Computable Shape}
The following proposition states that the shape of the $\chaitin_x$-profile can be precisely computed from $x^*$ and its time on the~$\Omega$ clock.

\begin{propA}\label{propcomputingprofile}
For all~$x$,
\begin{center} 
\begin{picture}(120,30)(-45,0)
\put(0,0){\framebox(30,30){$O(1)$}} 
\put(-23,23){\vector(1,0){20}}
\put(-23,7){\vector(1,0){20}}
\put(-43,13){\makebox(20,20){$x^*$}} 
\put(-50,-3){\makebox(20,20){$\theta(x^*)$}} 
\put(33,15){\vector(1,0){20}}
\put(63,5){\makebox(20,20){$\gen(\chaitin_x)$}} 
\end{picture}~~~~~~.
\end{center}
\end{propA}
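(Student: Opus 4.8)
The goal is to show that $\gen(\chaitin_x)$ — the generator set (convex corners) of the $\chaitin_x$-profile — can be computed from $x^*$ together with $\theta(x^*)$, up to an $O(1)$ advice. Let me set up notation: write $d^* = \theta(x^*)$ for the time of the shortest program $x^*$ on the $\Omega$ clock, and recall $|x^*| = K(x)$, and that from $x^*$ one can read off both $K(x)$ and $x$ itself (Remark \ref{remxstar}).

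The plan is as follows. First, I would observe that the $\chaitin_x$-profile is generated by the dots $(\theta(p), |p|)$ over all programs $p$ with $\U(p) = x$. The key point is that $x^*$ is the shortest such program and, by the footnote convention, also the fastest among shortest ones; so its dot $(d^*, K(x))$ is the ``bottom'' generator of the profile — every other relevant generator has strictly larger first coordinate and strictly smaller second coordinate (a program shorter than $x^*$ does not exist, so moving left from $d^*$ only the trivial non-constraint remains). Thus $\gen(\chaitin_x)$ lives in the finite region $[0, d^*] \times [K(x), n]$ where $n$ is the length of $x$ — but we don't even need $n$, since programs of length $> |x| + O(1)$ are never generators (``$\mathtt{Print}\ x$'' already sits at height $|x| + O(1)$ with time $O(1)$, i.e.\ $\theta = O(1)$). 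So the entire generator set is confined to $[0,d^*] \times [K(x), K(x) + (\text{something} \le d^* + O(1))]$, a region whose extent is fully determined by $d^*$ and $K(x)$, both computable from $x^*$.

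Next — and this is the crux — I would run the dovetailed enumeration underlying the definition of $\chaitin_x$, maintaining the sum $M$, and record every dot $(\theta(p), |p|)$ that appears when a program $p$ outputs $x$. I claim this enumeration can be \emph{halted with certainty} once $\theta$ has reached $d^*$: concretely, wait until the sum $M$ has stabilized its first $d^* + 1$ bits to $\Omega_{[d^*+1]}$ — but we cannot verify stabilization directly. Here is where $x^*$ pulls its weight twice over: we simulate the dovetailing until $x^*$ itself halts and outputs $x$; at that $j$-step the recorded time on the $\Omega$ clock is exactly $\theta(x^*) = d^*$ by definition, i.e.\ $M_{[d^*]} = \Omega_{[d^*]}$ is certified at that moment but $M_{[d^*+1]} \ne \Omega_{[d^*+1]}$. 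Any program $p$ with $\theta(p) < d^*$ must have halted \emph{before} the $\Omega$-clock advanced to $d^*$, hence before $x^*$ halts in the dovetailing; and any program with $\theta(p) = d^*$ but shorter than $x^*$ does not exist. Therefore, by the time $x^*$ halts in the simulation, every dot with first coordinate $< d^*$ that will ever be a generator has already appeared — so we collect all dots seen up to and including that moment, take the upward/rightward closure, extract its convex corners, and output that finite set. The $O(1)$ advice is just the fixed routine doing the dovetailing, the bookkeeping of $M$, and the closure/corner extraction.

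The main obstacle I anticipate is the boundary case at first coordinate exactly $d^*$: one must argue carefully that no generator of $\chaitin_x$ has first coordinate strictly greater than $d^*$, i.e.\ that the profile is genuinely ``closed off'' at time $d^*$. This follows because $\chaitin_x$ is upward-closed in the second coordinate and $x^*$ witnesses height $K(x)$ already at time $d^*$ — so for any $i > d^*$, the minimal height at time $i$ is still $K(x)$ (it cannot drop below $K(x)$, which is the global minimum program length for $x$), meaning the point $(i, K(x))$ adds nothing new beyond $(d^*, K(x))$; hence no convex corner occurs past $d^*$. A secondary subtlety is making precise that ``the fastest trumps'' convention guarantees $\theta(x^*) = d^*$ is the \emph{minimal} $\Omega$-clock time among all length-$K(x)$ programs for $x$, so that stopping at $x^*$'s halting moment doesn't miss a tied-length-but-faster program — but this is exactly what the convention in the footnote to the definition of $x^*$ secures. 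Everything else is routine simulation and finite combinatorics on $\N^2$.
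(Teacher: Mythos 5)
Your overall strategy is the same as the paper's: run the dovetailed enumeration, use the given $\theta(x^*)$ to identify the true prefix of $\Omega$, and observe that $(\theta(x^*), K(x))$ is the bottom‐left corner of the profile so no generator can lie past $\theta(x^*)$. However, there is a gap in the execution. You say ``record every dot $(\theta(p),|p|)$ that appears when a program $p$ outputs $x$'' and later ``collect all dots seen up to and including that moment,'' but $\theta(p)$ is \emph{not observable} on the fly: by definition it is the length of the prefix of $M$ that has \emph{stabilized} to a prefix of $\Omega$, which you cannot determine mid‐enumeration without already knowing bits of $\Omega$. This is exactly the subtlety the proposition turns on, and your write‐up conflates what is visible in real time with what can only be determined in hindsight.

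The paper handles this with an explicit two‐pass argument. First, run the dovetail until $x^*$ halts; since $\theta(x^*)=\theta$ is given as input, you can legitimately read off $\Omega_{[\theta]}=M_{[\theta]}$ at that moment (this is the only place the input $\theta$ is used, and without it you would not know how many bits of $M$ to trust). Then re‐run the enumeration and, whenever some $p$ halts with $\U(p)=x$, compare the current $M$ against the now‐known $\Omega_{[\theta]}$ to assign $p$ its correct clock time. Your one‐pass version can be repaired — store the snapshot of $M$ at each event and post‐process once $\Omega_{[d^*]}$ becomes available when $x^*$ halts — but as written it skips this essential step. A smaller issue: your boundedness remark invokes $\theta(\texttt{Print}\,x)=O(1)$, which is not obviously true (a run of $O(|x|)$ steps may stabilize more than $O(1)$ bits of $\Omega$); fortunately that bound is not needed once the stopping criterion ``halt when $x^*$ halts'' is in place, since $(\theta(x^*),K(x))$ already caps the first coordinate.
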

\begin{proof}
From~$x^*$ and~$\theta$, one can compute~$\finite{\Omega}{\theta}$ by executing the dovetailed enumeration until~$x^*$ halts in the enumeration. Thanks to~$\theta$, one then knows what is the precise prefix of~$M$ that has been stabilized,
 $\finite{\Omega}{\theta}$. With this at hand, one then starts again the dovetailed enumeration, this time, marking a dot at the coordinate~$(i',l)$ when a program of length~$l$ has computed~$x$ in time~$i'$ on the~$\Omega$ clock. One can then return any finite representation of~$\chaitin_x$, for instance, the minimal one $\gen (\chaitin_x)$.
\end{proof}

The previous proposition makes more specific a result by Vereshchagin and Vit\'anyi \cite[\S 7]{vv2002}, which informally states that from $x$, $K(x)$ and the complexity of a near minimal sufficient statistics, a curve $\lambda'$ can be computed, whose closure is logarithmically close to $\Lambda_x$. By the above, this $\lambda'$ can be taken to be~$\partial \chaitin_x$. 


\subsection{Relativizing the $\chaitin$-Profile} \label{sec2profiles}

%
Halting information, too, can be relativized to some auxiliary information~$z$, since it is in general uncomputable to determine whether a program~$p$ yields a halting~$\U(p,z)$. This \emph{relative halting information} 
can again take the form of a halting probability, 
\bes
\Omega^{z} \deq \sum_{p \st \U(p,z) \halts} 2^{-\len{p}} \,.
\ees
By a similar argument as in the unconditional case (\emph{cf.}~Section~\ref{sechalting}),~$\finom{i}^z$ solves the halting problem relative to~$z$, for all programs of length~$\leq i$. Thus, $i \lun K(\finite{\Omega^z}{i} \given z) \lun i + K(i)$.

\subsubsection*{Two ways!} 

It turns out that the $\chaitin$-profile can be relativized in two natural ways. 
To define these conditional profiles, think of a dove with two tails. One dovetailed enumeration runs all programs of length $\leq j$, lexicographically, for $j$ steps of computation on the reference computer~$\U$,~\emph{supplemented by auxiliary information}~$z$. If~$\U(p,z) \halts$ for some program~$p$, then~$2^{-\len p}$ is added to a sum~$M^z$. 
Before incrementing the~$j$ counter, the other tail is visited, running the same programs, also for~$j$ steps of computation on~$\U$, but \emph{without auxiliary information}. If~$\U(q) \halts$ for some program~$q$, then~$2^{-\len q}$ is added to a different sum~$M$.  I shall refer to~$M^z(j)$ and~$M(j)$ as the value taken by the sums just before incrementing the counter to~$j+1$. The idea is to have two clocks to measure time: one follows the stabilization of a prefix of~$\Omega^z$ by~$M^z$ and the the other, of~$\Omega$ by~$M$.

Whenever the first enumeration finds a~$p$ such that~$\U(p,z)=y$, the $j$\textsuperscript{th} step is completed and a \textcolor{red}{red} dot is marked at the coordinate $(\theta^z(p), |p|)$, where $\theta^z(p)$ is the length of the largest prefix of $\Omega^z$ that have stabilized in $M^z(j)$. I shall call~$\theta^z(p)$ the~\emph{time on the $\Omega^z$ clock}. Additionally,
 a \textcolor{blue}{blue} dot is marked at the coordinate $(\theta(p), |p|)$, where $\theta(p)$ is as before, the time on the $\Omega$ clock.
Define~\textcolor{red}{$\chaitina_{y \giv z}$} and~\textcolor{blue}{$\chaitinb_{y\giv z}$} 
 as the upwards and rightwards closure of the \textcolor{red}{red} and \textcolor{blue}{blue} dots, respectively. 
 
\begin{remA} \label{remallthesame}
$\chaitin^\varepsilon_{y\giv \varepsilon} = \chaitinb_{y \giv \varepsilon} = \chaitin_{y}$.
\end{remA}

\begin{propA} \label{propcalcul}
Let $\sigma$ be the time \emph{either} on the $\Omega$ or on the $\Omega^z$ clock when $(y \giv z)^*$ halts, then

\smallskip
\begin{center} 
\begin{picture}(120,46)(-45,0)
\put(0,0){\framebox(30,46){$O(1)$}} 
\put(-23,39){\vector(1,0){20}}
\put(-23,23){\vector(1,0){20}}
\put(-23,7){\vector(1,0){20}}
\put(-43,29){\makebox(20,20){$z$}} 
\put(-48,13){\makebox(20,20){$(y\giv z)^*$}} 
\put(-43,-3){\makebox(20,20){$\sigma$}} 
\put(33,32){\vector(1,0){20}}
\put(63,22){\makebox(20,20){$\gen(\chaitina)$}} 
\put(33,14){\vector(1,0){20}}
\put(63,4){\makebox(20,20){$\gen(\chaitinb)$}} 
\end{picture}\,~~~.
\end{center}
\end{propA}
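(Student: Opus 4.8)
The plan is to run the argument of Proposition~\ref{propcomputingprofile} through the two-tailed enumeration. The routine is given $z$, the program $p_0 := (y \giv z)^*$ and the number $\sigma$. From $z$ and $p_0$ it first computes $y = \U(p_0,z)$, the length $K(y\giv z) = \len{p_0}$, and the running time $R := \rt(p_0,z)$ (by plain simulation); in particular it knows a bounded rectangle $[\,K(y\giv z),\ \len y + O(1)\,]$ of lengths outside which neither $\chaitina_{y\giv z}$ nor $\chaitinb_{y\giv z}$ can have a generator --- a program longer than $\len y + O(1)$ cannot beat ``ignore $z$, print $y$'' on either clock, and nothing computes $y$ from $z$ in fewer than $K(y\giv z)$ bits.

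First I would run the two-tailed dovetailed enumeration, feeding $z$ to the conditional tail, until $p_0$ outputs $y$ in that tail --- a directly detectable event, occurring at $j$-step $R$. At that instant both partial sums $M(R)$ and $M^z(R)$ are in hand, and $\sigma$ plays exactly the role $\theta$ played in Proposition~\ref{propcomputingprofile}: it certifies how many leading bits of the relevant partial sum are already locked in. If $\sigma$ is read on the $\Omega^z$ clock, the length-$\sigma$ prefix of $M^z(R)$ is $\finite{\Omega^z}{\sigma}$; if it is read on the $\Omega$ clock, the length-$\sigma$ prefix of $M(R)$ is $\finom{\sigma}$. One of the two transfers between the settings is free: deciding unconditional halting reduces, at $O(1)$ cost in program length, to deciding $z$-relative halting (the reduction discards its auxiliary input), so $\finite{\Omega^z}{\sigma}$ at once yields $\finom{\sigma - O(1)}$. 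The reverse derivation --- $\Omega^z$-bits from $\Omega$-bits and $z$ --- is not free: an unconditional simulation of $\U(p,z)$ must carry $z$, so it costs about $\len z$ bits.

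Then, holding whichever halting prefixes are available, I would restart the two-tailed enumeration and recompute both clocks as it runs, marking the red dot $(\theta^z(p),\len p)$ and the blue dot $(\theta(p),\len p)$ each time a $p$ with $\U(p,z)=y$ halts. The known prefixes let the routine certify when it has seen every dot whose clock-coordinate lies in the rectangle; and because the tie-break convention (``the fastest trumps'') makes $p_0$ the fastest $K(y\giv z)$-bit program for $y$ given $z$, both of $p_0$'s dots occupy the extreme, minimal-length corner, so no generator of either profile lies outside the swept rectangle. The routine then outputs $\gen(\chaitina_{y\giv z})$ and $\gen(\chaitinb_{y\giv z})$; for $z=\epsilon$ this reduces correctly (cf.~Remark~\ref{remallthesame}), and the time-coordinate relation~\eqref{eqthetatau} is what made the unconditional version go through.

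The hard part will be showing that, whichever of the two clocks $\sigma$ names, the prefixes thus available --- one of them shortened by $\Theta(\len z)$ relative to the other --- still reach far enough along \emph{both} $\Omega$ and $\Omega^z$ to pin down every generator of \emph{both} profiles, given that the two clocks $\theta$ and $\theta^z$ on a single program can themselves differ by about $\len z$. Concretely: in the $\Omega$-clock case one holds $\finom{\sigma}$, but the extreme generator of the red profile $\chaitina_{y\giv z}$ sits at time $\theta^z(p_0)$, which the bound ``busy running time $\le$ program length $+$ auxiliary length'' controls only up to roughly $K(y\giv z)+\len z$, while the naive transfer from $\finom{\sigma}$ and $z$ delivers only $\finite{\Omega^z}{\sigma-\len z-O(\log \len z)}$. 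Closing this deficit is the crux --- whether by a sharper accounting of how deep into $\Omega^z$ the enumeration actually has to be read to fix the finitely many generators inside the rectangle, by an extraction of $\Omega^z$-information that does not bleed $\len z$ bits, or by reconstructing $\chaitina_{y\giv z}$ indirectly from the fully-recovered $\chaitinb_{y\giv z}$ (discarding auxiliary input being free, the two profiles cannot be far apart). Everything else is the bookkeeping already carried out in Proposition~\ref{propcomputingprofile}.
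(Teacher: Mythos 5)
You correctly mirror the paper's strategy for Proposition~\ref{propcomputingprofile}, and the real contribution of your write-up is spotting that the analogy is \emph{not} routine. The paper's entire proof is the single sentence ``The proof is analogous to that of Proposition~\ref{propcomputingprofile},'' so it never confronts the obstacle you flag: all generator witnesses halt by the $j$-step $j_0$ at which $(y\giv z)^*$ halts, so their $\theta^z$-coordinates are $\leq \theta^z((y\giv z)^*)$ and their $\theta$-coordinates are $\leq \theta((y\giv z)^*)$, but these two bounds can differ by $\Theta(\len z)$ --- precisely the distortion $\hd_z$ that Section~\ref{secgap} is devoted to. Knowing $\sigma$ on one clock gives you the corresponding prefix ($\finom{\sigma}^z$ or $\finom{\sigma}$), which does pin down the matching profile exactly as in Proposition~\ref{propcomputingprofile}; but the cheap ``discard $z$'' reduction only yields $\finom{\sigma-O(1)}$ from $\finom{\sigma}^z$, while the other direction needs $z$ hardcoded and bleeds $\len z + O(\log\len z)$ bits, and $\theta((y\giv z)^*)$ may lie anywhere in the interval $[\cbbinv\condcbb{z}(\sigma),\ \cbbinv\condcbb{z}(\sigma+1))$, which is not $O(1)$ wide when $z$ knows about halting.

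You honestly leave this deficit open rather than closing it, which is the right call: none of the three routes you sketch obviously succeeds, and the third one rests on a false premise --- the parenthetical ``the two profiles cannot be far apart'' is exactly what Section~\ref{secgap} shows to fail. So your attempt has a genuine gap, but it is a gap it shares with the paper's one-line proof; your write-up is valuable precisely for locating where the claimed analogy needs real work (or where the proposition might need to be re-stated, e.g.\ supplying both $\theta$ and $\theta^z$, or weakening the $O(1)$ advice).
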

\begin{proof}
The proof is analogous to that of Proposition~\ref{propcomputingprofile}. 
\end{proof}

\subsubsection*{Each with his Own Mate}
The following two propositions state that each version of the relative $\chaitin$-profiles follows its own other relative profile: 
\bes
\chaitina_{y\giv z} \elog \Lambda_{y \giv z}\qquad \text{while} \qquad \chaitinb_{y\giv z} \elog \mathcal L_{y \giv z}\,.
\ees

\begin{propA} \label{propcor1}
For all $y$ and $z$,
\bes
\mathcal L_{y\giv z} \subseteq \chaitinb_{y\giv z} \disand
\chaitinb_{y\giv z}  \subseteq O(\log n)\text{-neighbourhood of } \mathcal L_{y\giv z} \,.
\ees
\end{propA}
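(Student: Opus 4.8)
The plan is to establish the two inclusions of Proposition~\ref{propcor1} separately, mirroring the unconditional equivalence $\mathcal L_x \elog \chaitin_x$ but carrying the auxiliary string $z$ through every step. Recall that $\mathcal L_{y\giv z}$ is generated by programs $p$ with $\U(p,z)=y$ plotted at $(\brt(p,z), |p|)$, while $\chaitinb_{y\giv z}$ is generated by the \emph{same} programs plotted at $(\theta(p), |p|)$, where $\theta(p) = \cbbinv(\rt(p))$ is the time on the (unconditional) $\Omega$ clock. So the two profiles are built from identical sets of points, differing only in how the first coordinate records the running time. The whole proposition is therefore a statement comparing $\brt(p,z)$ with $\theta(p)$ for programs run with $z$ on the tape.

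For the first inclusion, $\mathcal L_{y\giv z}\subseteq \chaitinb_{y\giv z}$, I would show $\theta(p)\leq \brt(p,z)+O(1)$ — in fact I expect $\theta(p) \leq \tau(p,z)$ directly, with no error term, so the inclusion is exact. The point is that a program of length $\brt(p,z) = \bbinv(\rt(p,z))$ halts after exactly $\rt(p,z)$ steps in the \emph{unconditional} setting (by definition of $\bbinv$), hence by the first half of Eq.~\eqref{eqcousins}-style reasoning — namely the argument that when the slowest $i$-bit program halts the counter $j$ has reached $\bb(i)$ and $M$ cannot yet have stabilized $\finom{i}$ — one gets that $\cbb$ applied to $\brt(p,z)$ bits is at least $\rt(p,z)$; inverting, $\theta(p) = \cbbinv(\rt(p)) \le \brt(p,z)$. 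Actually, one has to be slightly careful: $\rt(p)$ here should be the running time of $\U(p,z)$, and $\theta$ as defined in the two-tailed enumeration uses the $\Omega$-clock (the $M$, not $M^z$, sum), so the comparison is genuinely between "running time of $p$ with $z$" measured on the unconditional busy-beaver scale versus on the unconditional $\Omega$ scale, which is exactly Eq.~\eqref{eqthetatau} applied with $\rt$ set to $\rt(p,z)$. This gives $\tau(p,z)\llog\theta(p)\le\tau(p,z)$, so the point $(\brt(p,z),|p|)$ lies (logarithmically, in fact within $O(1)$ in one direction) above $(\theta(p),|p|)$, yielding the inclusion up to the stated precision — and for the pure set-inclusion direction $\subseteq$ one uses precisely $\theta(p)\le\tau(p,z)$.

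For the reverse inclusion, I need: for every generator $(\theta(p),|p|)$ of $\chaitinb_{y\giv z}$ there is a program $p'$ with $\U(p',z)=y$, $|p'|\le |p|+O(\log n)$, and $\brt(p',z)\le\theta(p)+O(\log n)$. The strategy, following the man-in-the-middle philosophy: from $\finom{\theta}$ (the stabilized prefix of $\Omega$ at the moment $p$ produces $y$) one can solve the halting problem for all programs of length $\le\theta$, hence compute $\bb(\theta)$-ish bounds and run $p$ with $z$ to completion; but more to the point, one repackages $p$ together with a description of $\finom{\theta}$. Since $K(\finom{\theta}\giv\theta)\lun\theta$, giving $\theta$ (that's $O(\log n)$ bits) plus... no — the honest route is: the program $p$ halts (with $z$) within $\cbb(\theta+1)$ steps by definition of $\theta$ and $\cbb$, and $\cbb(\theta+1)\le\bb(\theta+1+K(\theta+1)+O(1)) = \bb(\theta+O(\log n))$ by Eq.~\eqref{eqcousins}. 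Hence a program that on input $z$ simply simulates $p$ has running time $\le\bb(\theta+O(\log n))$, i.e.\ $\brt(\cdot,z)\le\theta+O(\log n)$, while its length is $|p|+O(1)$. That directly puts $(\theta(p)+O(\log n),|p|+O(1))$ into $\mathcal L_{y\giv z}$, which is exactly membership of $(\theta(p),|p|)$ in an $O(\log n)$-neighbourhood of $\mathcal L_{y\giv z}$ by Remark~\ref{remclose}.

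The main obstacle I anticipate is bookkeeping the two clocks correctly: $\chaitinb_{y\giv z}$ uses the \emph{unconditional} $\Omega$-clock even though the programs run \emph{with} $z$, so I must resist the temptation to invoke $\Omega^z$ anywhere in this proof (that is reserved for $\chaitina$ and Proposition~\ref{thmlslamcond}). Once that is kept straight, both directions reduce to the already-established relations $\tau(p)\llog\theta(p)\le\tau(p)$ (Eq.~\eqref{eqthetatau}) and $\bb(i)\le\cbb(i)\le\bb(i+K(i)+O(1))$ (Eq.~\eqref{eqcousins}), applied to running times $\rt(p,z)$ rather than $\rt(p)$ — the arguments there never used that $z=\epsilon$. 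A secondary subtlety is that the generators of $\chaitinb_{y\giv z}$ come from the dovetailed enumeration which may mark a dot for $p$ at the completion of a $j$-step rather than exactly when $p$ halts, but this discrepancy is absorbed in the $O(1)$ rounding of $\theta$ and does not affect logarithmic-precision claims.
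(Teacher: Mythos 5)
Your proof is correct and takes essentially the same route as the paper: observe that $\mathcal L_{y\giv z}$ and $\chaitinb_{y\giv z}$ are generated by the \emph{same} programs $p$ with $\U(p,z)=y$, plotted at $(\tau(p,z),|p|)$ and $(\theta(p),|p|)$ respectively, and then apply the relation $\tau(p)\llog\theta(p)\le\tau(p)$ of Eq.~\eqref{eqthetatau} (which, as you note, holds just as well with running time $\rt(p,z)$) together with Remark~\ref{remclose}. The only difference is stylistic: the paper cites Eq.~\eqref{eqthetatau} directly, whereas you re-derive it from Eq.~\eqref{eqcousins}, but the content is identical.
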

\noi This  implies that $\mathcal L_{y\giv z} \elog \chaitinb_{y\giv z}$. 
\begin{proof}
A program~$p$ is the fastest witness of a point in~$\gen (\mathcal L_{y \giv z})$ if and only if it is the fastest witness of a point in $\gen(\chaitin_{y \giv z})$, both points being horizontally aligned at height $\len p$. By taking such a fastest witness for $(\tau (p), |p|) \in \gen (\mathcal L_{y \giv z})$ and $(\theta(p), \len p) \in \gen(\chaitin_{y \giv z})$, the conclusion follows from Equation~\eqref{eqthetatau} and remark~\ref{remclose} (i).
\end{proof}
\begin{propA} \label{thmlslamcond}
For all $y$ and $z$,
\beas
\Lambda_{y\giv z} &\subseteq & O(1)\text{-neighbourhood of }\chaitina_{y\giv z} \disand \\
\chaitina_{y\giv z} &\subseteq & O(\log n)\text{-neighbourhood of }\Lambda_{y \giv z} \,.
\eeas
\end{propA}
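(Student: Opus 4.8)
The plan is to mirror the two inclusions of Corollary~\ref{thmlslam} (the unconditional $\chaitin_x \elog \Lambda_x$), now carried out with auxiliary information~$z$ kept inert on the work tape, using $\chaitina_{y\giv z}$ (the red, $\Omega^z$-clocked profile) as the conditional analogue of $\chaitin_x$. The two inclusions are asymmetric: one direction is an exact $O(1)$ neighbourhood because one simply converts a model into a fast program, and the other direction is only $O(\log n)$ because it requires enumerating programs and naming elements of a set, which costs logarithmic overhead.

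For the first inclusion, $\Lambda_{y\giv z} \subseteq O(1)$-neighbourhood of $\chaitina_{y\giv z}$: I would take a generator point of $\Lambda_{y\giv z}$, witnessed by a model $S \ni y$ with $K(S\giv z) = i$ and two-part length $K(S\giv z) + \log|S| + |\alpha| = \ell$. The two-part description $D(S^*_{\giv z}, i^y_S) = \alpha\, S^*_{\giv z}\, i^y_S$ is a self-delimiting program of length~$\ell$ that computes~$y$ from~$z$. Its running time is essentially that of $S^*_{\giv z}$ (plus the $O(1)$ overhead of $\alpha$ and the indexing), so on the $\Omega^z$ clock its time $\theta^z$ is $O(1)$-close to $\theta^z(S^*_{\giv z})$, which in turn is bounded by the complexity of $S$ relative to $z$, i.e.\ $\theta^z(S^*_{\giv z}) \lun K(S\giv z) = i$ — because a program that decides too large a prefix of $\Omega^z$ from too few bits is impossible (the relative Berry/Chaitin bound $i \lun K(\finite{\Omega^z}{i}\giv z)$ from \S\ref{sec2profiles}). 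Hence the point $(i,\ell)$ has a witness landing within $O(1)$ of it in $\chaitina_{y\giv z}$. By Remark~\ref{remclose}(i) controlling the closure suffices to control the whole profile.

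For the second inclusion, $\chaitina_{y\giv z} \subseteq O(\log n)$-neighbourhood of $\Lambda_{y\giv z}$: I would take a generator point of $\chaitina_{y\giv z}$, witnessed by a program $p$ with $\U(p,z) = y$, $|p| = \ell$, and $\theta^z(p) = i$. The model is the set $S = \{ y' \st \exists q,~ \U(q,z)=y',~|q|\leq \ell,~\theta^z(q)\leq i \}$ — all strings computed from $z$ by an $\ell$-bit program in time $\leq i$ on the $\Omega^z$ clock. This $S$ contains~$y$, has $\log|S| \leq \ell$ trivially, and — crucially — is computable from $z$ together with $\finite{\Omega^z}{i}$: run the two-tailed dovetailed enumeration, and stop collecting once $i$ bits of $\Omega^z$ have stabilized in $M^z$, which is detectable using $\finite{\Omega^z}{i}$. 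Therefore $K(S\giv z) \lun K(\finite{\Omega^z}{i}\giv z) + O(\log \ell) \lun i + K(i) + O(\log \ell) = i + O(\log n)$. So the point $(K(S\giv z), K(S\giv z)+\log|S|+|\alpha|)$ of $\Lambda_{y\giv z}$ sits within $O(\log n)$ of $(i,\ell)$. Again Remark~\ref{remclose}(i) upgrades this to the whole profile. As in the unconditional case, one should check the two-dimensionality: both coordinates move by at most $O(\log n)$, so the $O(\log n)$-neighbourhood claim in the $L^\infty$ sense is genuinely what is obtained.

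The main obstacle is the same subtlety that forces $\chaitina$ rather than $\chaitinb$ in the first statement: one must verify that the $\Omega^z$ clock — not the $\Omega$ clock — is the right scale so that "program of length $\ell$ running for time $i$ on the $\Omega^z$ clock" faithfully translates to "model of relative complexity $\approx i$." The relative halting probability $\Omega^z$ can be much deeper than $\Omega$ (indeed, solving the halting problem relative to $z$ is exactly what distinguishes the two profiles and creates the gap of \S\ref{secgap}), so the bound $\theta^z(p) \lun K(S\giv z)$ in the second inclusion relies on $\finite{\Omega^z}{i}$ being computable-from-$z$ in $\lun i + K(i)$ bits, which is the relativized incompressibility bound; I would make sure that bound is invoked correctly with $z$ on the work tape throughout and that the dovetailing with two tails genuinely lets $\finite{\Omega^z}{i}$ serve as the halting oracle for the enumeration producing $S$. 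Everything else is a routine transcription of the proof of Corollary~\ref{thmlslam}.
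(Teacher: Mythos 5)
Your first inclusion matches the paper's argument: an $i$-bit first part of a two-part description cannot stabilize more than $i+O(1)$ bits of $\Omega^z$ (the relativized incompressibility bound), and the indexing step, though it takes $O(|S|)$ computation steps, moves the $\Omega^z$ clock by only $O(1)$. That direction is fine.

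The second inclusion has a genuine gap. You define $S = \{y' : \exists q,\ \U(q,z)=y',\ |q|\le\ell,\ \theta^z(q)\le i\}$ and bound $\log|S|\le\ell$ "trivially." But then the two-part description length is $K(S\giv z)+\log|S|+|\alpha| \lun (i+O(\log n)) + \ell + O(1) = i+\ell+O(\log n)$, which is off from $\ell$ by about $i$ — not an $O(\log n)$ neighbourhood of $(i,\ell)$. Two problems feed this. First, the defining condition should be $\theta^z(q)=i$ (or $\ge i$), not $\le i$: taking all programs that finish \emph{by} time $i$ on the $\Omega^z$ clock sweeps in essentially every short-and-fast program, so $|S|$ really can be close to $2^\ell$ and the bound $\log|S|\le\ell$ is essentially tight for your $S$. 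Second, and more importantly, even with the right set you must \emph{prove} that $\log|S| \lun \ell - i$, and the trivial counting argument cannot give this. The paper does it with a Kraft-type semi-measure argument: letting $a_{i\ell}$ count programs of length $\ell$ with $\theta^z \ge i$, observe $\sum_\ell a_{i\ell}2^{-\ell} \le 2^{-i}$ (otherwise too much mass of halting programs would remain after the $i$-th bit of $\Omega^z$ has stabilized, contradicting that bit). Since $\ell\mapsto a_{i\ell}2^{-\ell+i}$ is a lower semi-computable semi-measure relative to $z$ and $\finite{\Omega^z}{i}$, the coding theorem gives $a_{i\ell}\le 2^{\ell-i-K(\ell\giv z,\finite{\Omega^z}{i},\cdot)+O(1)}$, and it is precisely this subtracted $K(\ell\giv\cdots)$ term that cancels against the corresponding term in $K(S\giv z)\lun K(\finite{\Omega^z}{i},\ell\giv z)$, yielding $K(S\giv z)+\log|S| \lun \ell+K(i) \llog \ell$. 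Without this step the proposal does not establish the $O(\log n)$ closeness.
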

\noi This means that $\Lambda_{y\giv z} \elog \chaitina_{y\giv z}$.

\begin{proof}
\nit{$\Lambda_{y\giv z} \subseteq  O(1)\text{-neighbourhood of }\chaitina_{y\giv z}$.}

\noi Let~$S\ni y$ be a model witnessing $(i,\lambda) \in \gen(\Lambda_{y\giv z})$. It induces a program that computes~$y$ from~$z$ via its two-part description, of length~$\lambda$. But what is its time on the $\Omega^z$ clock? Being~$i$-bit long, the first part runs for at most time~$i+ O(1)$ on the~$\Omega^z$ clock, which is the most conservative bound for an $i$-bit program running with auxiliary information $z$. 
And \emph{the second part of a two-part description is fast}: It takes~$O(|S|) \leq O(|\{0,1\}^{\len y}|)$ steps, which is negligible compared to the time bound of the first part, so it can be absorbed by increasing the time on the~$\Omega^z$ clock to $i+O(1)$.  

\nit{$\chaitina_{y\giv z} \subseteq  O(\log n)\text{-neighbourhood of }\Lambda_{y \giv z}$.}

\noi
Let $(i, \ell)\in \gen(\chaitina_{y\giv z})$ be witnessed by a program~$p$ computing~$y$ given~$z$, of length~$\ell$ and time $i$ on the $\Omega^z$ clock. Programs can be grouped together on the basis of their length and their time on the $\Omega^z$ clock. 
Hence, for arbitrary $l$, define 
\beas
\tilde A_{i,l} &=& \{r \st |r| = l \et \theta^z(r)\geq i\} \,, \\
\bar A_{i,l} &=& \{r \st |r| = l \et \theta^z(r)=i\}  ~ \text{and} \\
A_{i,l} &=& \{\U(r) \st |r| = l \et \theta^z(r)=i\} \,.
\eeas
%
%
Notice that $p$ is an element of the first two sets and that~$y$ is an element of the~$A_{i,\ell}$. I shall show that~$A_{i,\ell}$ is a model with
\bes
K(A_{i,\ell}\given z) \llog i \disand K(A_{i,\ell}\given z) + |A_{i,\ell}| \llog \ell \,.
\ees 

First, observe that given $z$, $ A_{i,\ell}$ can be computed from $\bar A_{i,\ell}$, which can be computed from $\finite{\Omega}{i}^z$ and~$\ell $, so
\bes
K( A_{i,\ell} \given z)  \lun  K(\finite{\Omega}{i}^z, \ell \given z) \llog i \,.
\ees
Second, the log-cardinality of~$A_{i,\ell}$ needs to be bounded, and because it contains fewer elements than $\tilde A_{i,l}$, bounding the latter suffices.  Define~\mbox{$a_{il} \equiv |\tilde A_{i,l}|$}. For a fixed~$i$, the discrete application $l \mapsto a_{il}$ is lower semi-computable from~$z$ and~$\finite{\Omega}{i}^z$. Moreover,
\bes
\sum_l a_{il} 2^{-l} \leq 2^{-i} \,,
\ees
otherwise, too large of an algorithmic mass of programs would remain to halt --- contradicting the~$i$-th bit of~$\Omega$. This means that~$a_{il} 2^{-l+i}$ is a lower semi-computable semi measure, relative to~$z$ and~$\finite{\Omega}{i}^z$, so by the coding theorem\footnote{
The coding theorem~\cite{Li2008} states that every discrete application $j \mapsto \mu(j)$ that is 
 (i) lower semi-computable from auxiliary information~$z$ and
 (ii) a semi-measure, \emph{i.e.}, $\sum_j \mu(j) \leq 1$,
 has~$\mu(j) \leq 2^{-K(j\given z) + O(1)}$.},
\bes
a_{il} \leq 2^{l-i-K(l \given z, \finite{\Omega}{i}^z, K(\finite{\Omega}{i}^z \giv z))+O(1)} \,.
\ees
Therefore,~$\log |A_{i,\ell}| \leq \log a_{i\ell} \lun  \ell-i-K(\ell \given z, \finite{\Omega}{i}^z, K(\finite{\Omega}{i}^z \giv z))$. So
\bea \label{eqsuffstatcond}
K(A_{i,\ell})+ \log |A_{i,\ell}| &\lun& K(\finite{\Omega}{i}^z, \ell \given z) + \ell-i-K(\ell \given z, \finite{\Omega}{i}^z, K(\finite{\Omega}{i}^z \giv z))\nonumber \\
&\eun& K(\finite{\Omega}{i}^z \given z) + \ell - i \nonumber \\
&\lun&   \ell + K(i)  \\
&\llog & \ell \,.\nonumber
\eea
\end{proof}

Corollary~\ref{thmlslam} thus follows from the last two propositions and from Remark~\ref{remallthesame}. This corollary is the equivalence between depth and sophistication.
 Minimal sufficient statistics induce a two part-code in a way that forces triviality, and hence fast computation, of the second part.  This then distillates all the slow computation  (\emph{i.e.}, the deep structures in Bennett's sense) into the model (the sophisticated structures in Kolmogorov's sense). In an algorithmic information theoretic sense, those deep and sophisticated structures essentially made of initial segments of $\Omega$; they are full of halting information. 
 
%

\subsection{Losing Synchronicity}

Light can now be shed on the difference between the conditional profiles~$\mathcal L_{y\giv z}$ and~$\Lambda_{y\giv z}$, through their equivalent representations in terms of conditional~$\chaitin$-profiles. 
The difference between $\chaitina_{y \giv z}$ and $\chaitinb_{y \giv z}$ is only a horizontal distortion, since generators come in horizontally  aligned pairs as they are witnessed by the same program~$p$ whose length establishes the second coordinate. The distortion reflects that of the clocks $\Omega^z$ and~$\Omega$, with respect to which the times $\theta^z(p,z)$ and $\theta(p,z)$ determine the first coordinate. 

As a first step to better characterize the difference in the flow of the clocks, the program~$p$ witnessing the aforementioned aligned generators can be abstracted, and rely only on the times $\theta^z$ and $\theta$ showed by the clocks.

\begin{deffA}
Define the \emph{relativized \namecbb}, $\condcbb{z}(i)$, as the value of the $j$ index when $i$ bits of $\Omega^z$ have stabilized, namely,
\bes
\condcbb{z}(i) \deq \min \{j \st M^z(j) = \finite{\Omega^z}{i} \beta\} \,.
\ees
\end{deffA}
\noi Observe that
\bes
\condcbb{z}(\theta^z) \leq  \rt(p,z)  < \condcbb{z}(\theta^z + 1) \implies \cbbinv \condcbb{z} (\theta^z) \leq \theta \leq \cbbinv \condcbb{z} (\theta^z + 1) \,,
\ees
so the connection from the $\Omega^z$ to the $\Omega$ clock is $\cbbinv \condcbb{z} (\cdot)$.
In what follows, this connection is reframed in terms of wether --- and if so \emph{how} --- $z$ has information about the halting problem.


\subsubsection*{Relation with Halting Knowledge}

First, I exemplify this connection. Suppose~$z=\finom{a}$, and $\theta^z=b$, what is the corresponding time~$\theta$ on the~$\Omega$ clock? If $b$ bits of~$\Omega^z$ have stabilized, it means that no more programs shorter than~$b$ bits in length will ever lead to a halting~$\U(\cdot,z)$ computation, in particular, the program described in the following paragraph.

 
  With the help of $z$ and some extra hardcoded bits of~$\Omega$, assemble~$\finom{a+b-O(\log b)}$, and execute the dovetailed enumeration of programs, run without~$z$, until the sum exceeds~$\finom{a+b-O(\log b)}$. This particular computation takes a time $ a+b - O(\log b)$ on the~$\Omega$ clock, so~$\theta \glog a+ b$. By incompressibility of~$\Omega$, in fact~$\theta \elog a+ b$ holds. This example puts in evidence that the gap between $\theta^z$ and~$\theta$ depends upon~$z$'s knowledge about the halting problem. More precisely, the distortion in the flow of the clocks turns out to be a property of \emph{the manner} in which $z$ has such knowledge.

In the spirit of the above example, the following definition quantifies how close to~$\Omega$ one can get from $z$ and~$i$ bits of advice.
\begin{deffA}
The \emph{reach curve} of $z$ is defined as 
\bes
\spit_z(i) \deq \max \{ s \st K(\finite{\Omega}{s} \beta \given z) \leq i \et \finite{\Omega}{s} \beta < \Omega \} \,.
\ees
\end{deffA}
This definition is reminiscent of monotone complexity, where the finite string $\beta$ is a tool for an overall $\finite{\Omega}{s} \beta$ possibly simpler than the raw $\finite{\Omega}{s}$, generally for length reasons. 
The terminology has a twofold interpretation. $\spit_z(i)$ measures how close to~$\Omega$ can be reached, which is directly connected to how large a number (or running time) can be reached.
If~$z$ is independent from the halting problem, its reach curve follows the identity line within logarithmic resolution: $i$ bits of program grants $\elog i$ bits of prefix of $\Omega$. However, if~$z$ contains pieces of information about~$\Omega$ its reach curve will display the benefits of that knowledge by moving above of the identity line. 

The following proposition pinpoints \emph{what} information is the most helpful for~$z$ to reach as large a prefix of~$\Omega$ as possible. In other words, what should the~$i$ bits of advice be made of? The answer is the initial bits of~$\Omega^z$. Hence, if~$z$ has holes in its halting knowledge, then~$\Omega^{z}$ fills them.

\begin{propA}
Let $\spit_z(i)=r$ be witnessed by the program $p$ such that $\U(p,z)=\finom{r}\beta < \Omega$ and~$\len p \leq i$. Then~$p$'s algorithmic information is essentially that of~$\finom{i}^z$, since
\bes
K(\finom{r} \beta \given z, \finom{i}^z) = O(1) \,.
\ees
\end{propA}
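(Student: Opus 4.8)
The plan is to show two inequalities between $K(\finom{r}\beta \given z, \finom{i}^z)$ and $O(1)$, but of course only the ``$\lun$'' direction is substantive; the other is trivial. So I want to argue that, given $z$ and the first $i$ bits of $\Omega^z$, one can reconstruct $\finom{r}\beta$ with only $O(1)$ further advice. The key observation is the one already exploited in the ``Relation with Halting Knowledge'' example: the string $\finom{i}^z$ solves the relative halting problem for all programs of length $\leq i$ with oracle $z$, so in particular it tells us which $i$-bit programs $p'$ satisfy $\U(p',z)\halts$, and running those to completion lets us compute all outputs $\U(p',z)$ for $\len{p'}\leq i$. Among those outputs is $\finom{r}\beta$ itself, since by hypothesis $\spit_z(i)=r$ is witnessed by a program $p$ with $\len p \leq i$ and $\U(p,z)=\finom{r}\beta$. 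The only thing missing is \emph{which} of these finitely many outputs is the witness we want — and here is where I would use the defining maximality in $\spit_z$.

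Concretely: from $z$ and $\finom{i}^z$, enumerate the set $W$ of all strings of the form $\U(p',z)$ where $\len{p'}\leq i$ and $\U(p',z)\halts$ (this is a finite, computable-from-$(z,\finom{i}^z)$ set). Each element $w\in W$ is a finite string; for each one, check whether it has the syntactic shape ``prefix of some real number followed by a tail bit'', i.e. whether $w = u\gamma$ with $u$ a candidate prefix of $\Omega$ and $\gamma\in\{0,1\}$, and — crucially — whether $u\gamma < \Omega$. The inequality $u\gamma < \Omega$ is semi-decidable from below: lower-semi-compute $\Omega$ by the usual dovetailed sum $M$, and wait until $M$ exceeds $u\gamma$; if that ever happens we learn $u\gamma < \Omega$ (and if $u\gamma \geq \Omega$ we simply never confirm it, but that is fine — we are only using this to \emph{accept} the right candidates, and the true witness $\finom{r}\beta$ does satisfy $\finom{r}\beta<\Omega$). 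Run this semi-decision in parallel over all $w\in W$; the witnesses that get accepted are exactly (a subset of, but eventually including) the valid candidates $\finom{s}\beta'<\Omega$ producible by $i$-bit programs with oracle $z$, i.e. exactly the strings realizing values $s\leq\spit_z(i)=r$. Now take the accepted $w$ whose ``real part'' $u$ is \emph{longest}: by the definition of $\spit_z$ as a maximum, that longest one has $\len u = r$, hence $w=\finom{r}\beta$. A tie among several accepted $w$ with $\len u = r$ is broken by an $O(1)$ convention (e.g. lexicographically least, or ``the one named by $p$'' if $p$ is part of the $O(1)$ advice). Thus $\finom{r}\beta$ is computed from $z$ and $\finom{i}^z$ with $O(1)$ advice, giving $K(\finom{r}\beta \given z, \finom{i}^z) = O(1)$, and combined with the trivial lower bound $\geq 0$ this is the claimed equality.

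The main obstacle — and the point that needs the most care — is the timing/liveness of the ``longest accepted $w$'' rule: the semi-decision $u\gamma<\Omega$ only ever confirms, never refutes, so at any finite stage of the enumeration we might not yet have seen that the true $\finom{r}\beta$ is valid, and might prematurely commit to a shorter accepted candidate. The resolution is that we never ``commit'': the procedure outputs, in the limit, the longest $u$ among all $w$ ever accepted, and since $W$ is finite and $\finom{r}\beta\in W$ is eventually accepted (because $\finom{r}\beta<\Omega$ is genuinely true, so $M$ eventually overshoots it), the limit is correct. One should also double-check that no accepted $w$ can have real part strictly longer than $r$: any such $w=\finom{s}\beta'<\Omega$ with $s>r$ would be producible by an $i$-bit program with oracle $z$ — namely the program $p'$ that output it — contradicting $\spit_z(i)=r$. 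That is exactly the maximality in the definition, so the bound is tight and the argument closes.
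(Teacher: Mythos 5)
Your construction does not halt, and that is a genuine gap. The claim $K(\finom{r}\beta \given z, \finom{i}^z) = O(1)$ requires a single $O(1)$-length program that, given $(z,\finom{i}^z)$ as auxiliary data, \emph{halts} and outputs $\finom{r}\beta$. Your procedure lower semi-computes $\Omega$ to semi-decide ``$w < \Omega$'' for each $w$ in the finite set $W$, and then ``outputs in the limit'' the numerically largest accepted $w$. As you notice yourself, the confirmations only ever arrive and never refute, so at no finite stage can the program know that the currently-largest accepted $w$ is final; a stabilizing-in-the-limit output is exactly what prefix complexity does not license, and the ``never commit'' move you offer is a description of the problem, not a resolution.

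The paper's proof adds precisely the twist your argument is missing: rather than semi-deciding $w < \Omega$ directly, it turns that question into a \emph{halting question for a slightly longer program with oracle $z$}, which the given data already answer. For each halting $q$ of length $\leq i$, form the $\U(q,z)$-dovetail: a program which, run with auxiliary $z$, first recomputes $v = \U(q,z)$ from $q$ and $z$, and then dovetails the unconditional enumeration until the sum $M$ exceeds $v$. This dovetail halts if and only if $v < \Omega$, and --- crucially --- it has length $\leq i + O(1)$, because $v$ is recomputed from $q$ and $z$ rather than hardcoded. Its halting status is therefore settled by $\finom{i+O(1)}^z$, i.e.\ by $\finom{i}^z$ plus $O(1)$ further bits of $\Omega^z$, which are legitimate constant advice. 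Reading off those halting bits lets the program decide which candidates lie below $\Omega$, take the largest, output it, and halt. Two smaller remarks: $\beta$ in the definition of $\spit_z$ is an arbitrary finite string, not a single bit as your $\gamma \in \{0,1\}$ suggests; and ``the accepted $w$ with the longest real part'' is not directly readable from $w$, so it must be operationalized as ``the numerically largest accepted $w$'', which coincides because among reals below $\Omega$ numerical order agrees with the length of initial agreement with $\Omega$.
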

\begin{proof}
From $z$ and $\finom{i}^z$, one can compute the list~$\U(q,z)$, for all halting programs~$q$ of length~$\leq i$ (the non-halting programs are discarded). Each such~$q$ can be transformed in a program~$O(1)$ longer that I shall call the the $\U(q,z)$-dovetail. This consists of the dovetailed enumeration of all programs, run without~$z$, until the sum $M$ exceeds the value of~$\U(q,z)$ previously computed. The halting status of each $\U(q,z)$-dovetail can be obtained from $z$ and $\finom{i+O(1)}^z$. The latter is non-constructively acquired by the~$O(1)$ advice. The largest $\U(q,z)$ leading to a halting $\U(q,z)$-dovetail is then outputted.
\end{proof}

The next proposition states that the reach curve~$\spit_z(i)$ expresses equivalently the connection~$ \cbbinv \condcbb{z}(i)$ between clocks: they are logarithmically close to one another. Both relations are non-decreasing, so their upwards and \emph{leftwards} closure define the respective profiles $\mathcal R_z$ and $\mathcal B^{-1} \mathcal B_z$. Since the profiles can go beyond the length of~$z$, the upcoming~``$\elog$'' relation refers to $O(\log i)$, where~$i$ is the first coordinate of the profiles' points.

\begin{propA}\label{proprbb}
For all~$z$, 
\bes
\mathcal R_z \elog \mathcal B^{-1} \mathcal B_z \,.
\ees
\end{propA}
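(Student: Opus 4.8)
The goal is to show that the two profiles $\mathcal R_z$ and $\mathcal B^{-1}\mathcal B_z$ — the upwards-and-leftwards closures of the non-decreasing functions $i\mapsto \spit_z(i)$ and $i\mapsto \cbbinv\condcbb{z}(i)$ respectively — are $O(\log i)$-close. Since both profiles are determined by monotone curves, by Remark~\ref{remclose} it suffices to establish the two one-sided inequalities
\bes
\spit_z(i) \glog \cbbinv\condcbb{z}(i) \disand \spit_z(i) \llog \cbbinv\condcbb{z}(i)\,,
\ees
with logarithmic slack in both the value and the argument. I would treat each direction by exhibiting an explicit program.

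\textbf{Direction $\spit_z(i) \glog \cbbinv\condcbb{z}(i)$.} Set $i' = \cbbinv\condcbb{z}(i)$; by definition of $\cbbinv$ this means there is a program of length $\approx i'$ whose running time is $\condcbb{z}(i)$, i.e.\ long enough to stabilize $i$ bits of $\Omega^z$ in the dovetailed enumeration that uses $z$. Equivalently, from $i$ bits of $\Omega^z$ and $z$ one can run the $z$-free dovetail long enough to stabilize roughly $i'$ bits of $\Omega$ — this is exactly the kind of computation performed in the example preceding the Reach-curve definition and in the proof of the previous proposition. More carefully: given $\finom{i}^z$ and $z$ one computes $\halting{\le i}$ relative to $z$, hence the set of halting values $\U(q,z)$ for $\len q \le i$; dovetailing the $z$-free enumeration until $M$ exceeds the largest such value takes time $\ge \condcbb{z}(i)$ on the $z$-free machine, so the corresponding prefix of $\Omega$ that stabilizes has length $\ge \cbbinv$ of that time $= i'$ up to $O(1)$. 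The advice needed to name the resulting prefix $\finom{s}\beta<\Omega$ is $\lun K(\finom{i}^z \given z) \lun i + K(i) \llog i$, which gives $\spit_z(i+O(\log i)) \geq i' - O(1)$, i.e.\ the desired inequality with logarithmic slack in the argument.

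\textbf{Direction $\spit_z(i) \llog \cbbinv\condcbb{z}(i)$.} Here I would argue by contradiction / counting. Suppose $p$ witnesses $\spit_z(i) = r$, so $\U(p,z) = \finom{r}\beta < \Omega$ with $\len p \le i$. Knowing $\finom{r}\beta$ (and a $\log$-size correction to turn it into an honest prefix $\finom{r-O(\log r)}$ of $\Omega$) lets one stabilize $\approx r$ bits of $\Omega$ in the $z$-free dovetail, hence run for time $\ge \cbb(r-O(\log r))$ and therefore also time $\ge \condcbb{z}(r - O(\log r))$ on the $z$-supplemented enumeration modulo the $\cbb$-vs-$\condcbb{z}$ comparison; but the whole assembling-and-running computation is driven by a program of length $\lun \len p + O(1) \le i + O(1)$ with auxiliary information $z$. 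Invoking $\cbbinv$ on both sides, a program of length $\approx i$ running with $z$ for at least $\condcbb{z}(r-O(\log r))$ steps forces $\cbbinv\condcbb{z}$ evaluated at $r - O(\log r)$ to be $\lun i$, i.e.\ $r \llog \cbbinv\condcbb{z}(i)$ after re-expressing monotonically. The bookkeeping that relates $\cbb$, $\condcbb{z}$ and their inverses here — the analogue of Equations~\eqref{eqcousins} and \eqref{eqthetatau} but in the relativized setting — is the fiddly part; I expect the main obstacle to be keeping the $O(\log)$ error terms from compounding across the chain of "compute a prefix $\Rightarrow$ run that long $\Rightarrow$ invert the badger" steps, and in particular making sure the $O(\log i)$ slack in the argument of $\spit_z$ and the $O(\log i)$ slack in its value are simultaneously achievable, so that the final statement is an honest $O(\log i)$-closeness of profiles rather than a weaker bound.
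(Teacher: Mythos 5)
Your overall strategy — two explicit program constructions, one per direction — is the paper's, but both directions have genuine gaps in the execution. In the first, the clean move is to run the two dovetailed enumerations \emph{side by side}, using $\finom{i}^z$ as a stopwatch: when $M^z$ has stabilized to $\finom{i}^z$, output the current value of the $z$-free sum $M$, which by definition equals $\finom{r}\beta$ with $r = \cbbinv\condcbb{z}(i)$. This costs $K(\finom{i}^z\given z) \llog i$ bits and hence gives $\spit_z(i + O(\log i)) \geq r$. Your alternative — computing the halting outputs $\U(q,z)$ for $\len q \leq i$ and running the $z$-free dovetail ``until $M$ exceeds the largest such value'' — does not track: the outputs $\U(q,z)$ are arbitrary strings with no reason to be lower bounds for $\Omega$, and there is no reason for the resulting running time to relate to $\condcbb{z}(i)$.

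In the second direction, the step ``run for time $\geq \cbb(r - O(\log r))$ and therefore also time $\geq \condcbb{z}(r - O(\log r))$'' is a non sequitur — $\cbb$ and $\condcbb{z}$ are not pointwise comparable — and it is an unnecessary detour. The clean argument is: the assembled program (run $p$ with $z$, then dovetail without $z$ until $M > \finom{r}\beta$) has length $\leq i + O(1)$ and, run with $z$, takes at least $\cbb(r)$ steps; the $\log$-size correction you introduce is unneeded, since $\finom{r}\beta < \Omega$ already forces the first $r$ bits of $M$ to agree with $\Omega$ once $M$ passes $\finom{r}\beta$. Because $2^{-(i+O(1))}$ has not yet been added to $M^z$ at the $j$-step where this program halts, $i + O(1)$ bits of $\Omega^z$ are not stabilized there, so $\condcbb{z}(i + O(1)) \geq \cbb(r)$, and applying $\cbbinv$ gives $\cbbinv\condcbb{z}(i + O(1)) \geq r$. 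Your closing bookkeeping instead targets $\cbbinv\condcbb{z}(r - O(\log r)) \lun i$, which is not the monotone re-expression of the needed $\cbbinv\condcbb{z}(i) \glog r$; this is exactly where the inverse confusion you flag as ``fiddly'' lands.
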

\begin{proof}
It suffices to show that~$\spit_z(i') \geq \cbbinv \condcbb{z}(i)$, for $i' \llog i$ and vice versa. 
Let~$\cbbinv \condcbb{z}(i)=r$, so when the double dovetailed enumeration is performed, when~$i$ bits of~$\Omega^z$ stabilize,~$r$ bits of~$\Omega$ are stabilized. A program of length~$\llog i$, with knowledge of $\finom{i}^z$, can then compute $\finom{r}\beta$ by also running the two dovetailed enumerations, and when $\finom{i}^z$ is stabilized in one enumeration, outputs the sum $M=\finom{r}\beta$ of the other%
\footnote{Can it be shown that the monotone complexity of $\finom{i}$ is smaller than $i + O(1)$, \emph{i.e.}, $\forall i \exists \gamma K(\finom{i}\gamma) \lun i$? If so the 2 profiles would be $O(1)$ close.}.

Now, I show that~$\cbbinv \condcbb{z}(i')\geq \spit_z(i)$, for $i' \lun i$. Let $\spit_z(i) = r$, so $\finom{r}\beta = \U(p,z)$ for some $p$ of length $\leq i$. This program can be transformed into the $\finom{r}\beta$-dovetailing, of length~$i' =i + O(1)$. Therefore $\condcbb{z}(i')$ is no smaller than the running time of the~$\finom{r}\beta$-dovetailing, which is long enough to stabilize~$r$ bits on the~$\Omega$ clock, so $\condcbb{z}(i') \geq \cbb(r)$.
\end{proof}

\begin{figure}
	\begin{center}
		\includegraphics[
		trim={4cm 10cm 18cm 5cm},clip, width=8cm]{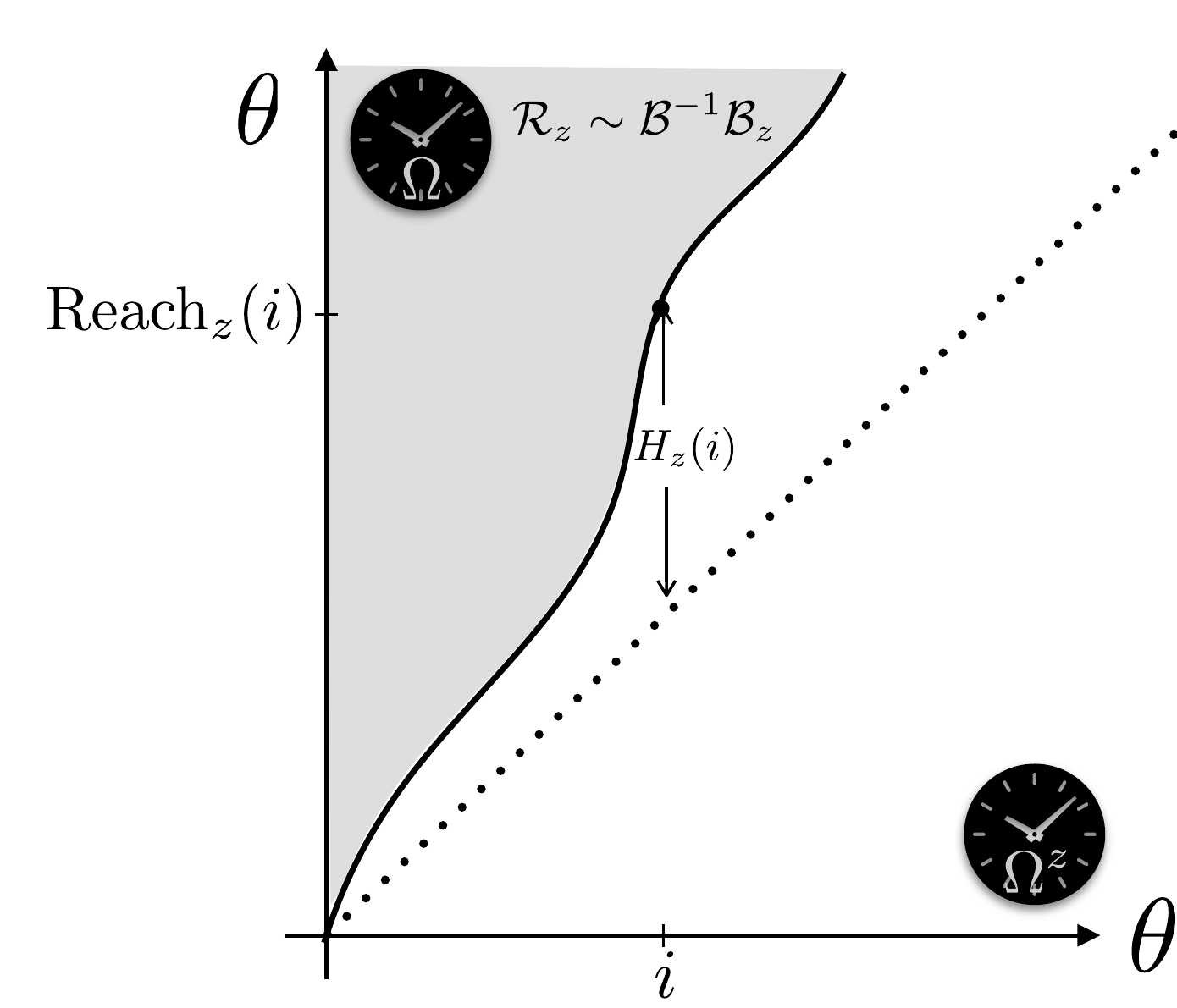}
	\end{center}
\caption{The connection between the $\Omega$ and the $\Omega^z$ clock is given by the reach curve of $z$.}
\label{connection}
\end{figure}

\subsubsection*{Naming the Gap}

\begin{figure}
	\begin{center}
		\includegraphics[
		trim={5cm 8cm 14cm 7cm},clip, width=8cm]{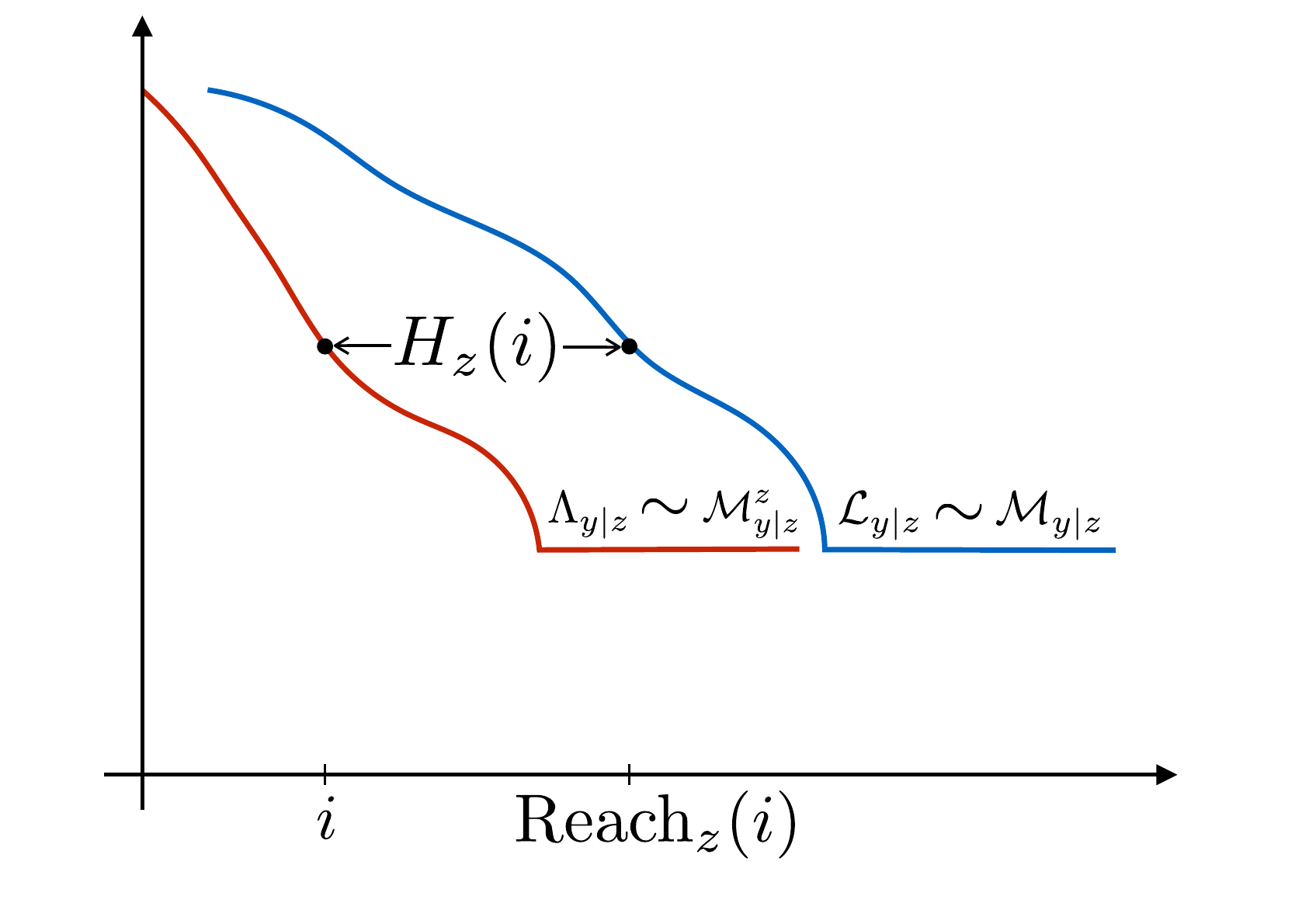}
	\end{center}
\caption{The conditional profiles and the gap between them.}
\label{Hz}
\end{figure}

Of interest is the quantity
\bes
\hd_z(i) \deq \spit_z(i) - i \,,
\ees
which 
measures the time \emph{difference} between the $\Omega$ and the $\Omega^z$ clocks, hence, the gap between the relative profiles.
See Figures~\ref{connection} and~\ref{Hz}.
 Being an affine transformation of~$\spit_z(i)$, it encodes the same information. 
 
$\hd_z$ may be called the \emph{halting materialization distribution} because of the following observations.
For small values (logarithmic in the length of~$z$), the halting materialization distribution coincides with the reach curve,
\bes
\hd_z(O(\log n)) \elog \spit_z(O(\log n)) \,,
\ees
and represents the largest prefix of~$\Omega$ that can be computed from a logarithmic advice (such halting information materializes easily). This value is an important characteristic of strings with any sort of interesting profiles. In fact, a string~$z$ that displays a drop at value~$d$ in his time profile~$\mathcal L_z$, will have~$\hd_z(O(\log n)) \glog d$. This is because such a drop witnesses that the fastest program of a certain length $\ell$, that computes~$z$, runs for~$\bb(d)$ steps of computation, long enough to stabilize almost~$d$ bits of~$\Omega$. Therefore, with~$z$ at hand, an~$O(\log n)$ advice to reach close to~$\Omega$ is simply~``$\ell$''. It serves as a promise of finding an~$\ell$-bit long program that computes~$z$. In the process of finding it, the sum~$M$ of the dovetailed enumeration will stabilize $\elog d$ bits of~$\Omega$.

And at the other end of the spectrum,~$\lim_{i \to \infty} \hd_z(i) = I(z \colon \Omega)$. In fact,
\beas
\hd_z(i) &=& \max \{s-i \st K(\finom{s}\beta \given z ) \leq i \} \\
&\elog &  \max \{s- K(\finom{s}\beta \given z ) \st K(\finom{s}\beta \given z ) \leq i \} \\
&\elog &  \max \{K(\finom{s})- K(\finom{s} \given z ) \st K(\finom{s}\beta \given z ) \leq i \} \\
&\elog &  \max \{I(\finom{s} \st z ) \st K(\finom{s}\beta \given z ) \leq i \} \,.
\eeas
As~$i$ grows,~$s$ grows at the same pace or faster, so in the limit~\mbox{$i \to \infty$},~$s$ goes also to~$\infty$.
 In between small and large values, the shape of~$\hd_z$ informs us of how hard it is to materialize the halting knowledge of~$z$. For instance, $\zeta$ made of bits number 501 to 2000 of $\Omega$ is useless to solve \emph{any} halting problem... until a clever 500-bit advice is provided. In such a case, the halting information of~$\zeta$ is only \emph{materialized} after $i=500$, and $\hd_\zeta$ is indeed a step function, with the step at that value.

 If the antistochastic strings looked like the strangest of all in the view of their $\Lambda$ and their $\mathcal L$ profiles, still, they display a relatively straightforward halting materialization distribution: It is constant at the value corresponding to the drop of the $\mathcal L$ profile, which is at value of their complexity. 
More elaborate halting materialization profiles are possible and in fact, the following proposition shows that all shapes are possible.
 
\begin{propA}
For any non-decreasing function~$h(i)$ that eventually remains constant, there exist a string~$\gamma$ whose halting materialization distribution~$\hd_\gamma(i)$ is~$O(K(h))$ close to~$h(i)$, where $K(h)$ is the complexity of the function $h$, which is defined as $\min_p\{|p| \st p \text{ computes }h \}$.
\end{propA}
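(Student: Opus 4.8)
The plan is to exhibit $\gamma$ explicitly by hiding bits of $\Omega$ at carefully spaced positions, interleaved so that each extra bit of advice unlocks exactly the prefix of $\Omega$ that $h$ prescribes. Write $H_\infty = \lim_i h(i)$ and let $i^*$ be the least index with $h(i^*) = H_\infty$, so $h$ has a finite description $\hat h$ with $K(\hat h) \eun K(h)$. For $i \ge 0$ let the \emph{block} $B_i$ be the interval of positions $(h(i)+i,\ h(i+1)+i+1]$; as $h$ is non-decreasing, $h(i)+i$ strictly increases, so the $B_i$ tile $\{h(0)+1, h(0)+2, \dots\}$ with $\len{B_i} = h(i+1)-h(i)+1$, which is $1$ for $i \ge i^*$. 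I declare the first position of each $B_i$ \emph{advice-needed}, and every other position (together with $1,\dots,h(0)$) \emph{$\gamma$-stored}; write $G = \{g_1 < g_2 < \cdots\}$ for the $\gamma$-stored positions and $\ell_1 < \ell_2 < \cdots$ for the advice-needed ones, so $\ell_m = h(m-1)+m$ and, the blocks being eventually singletons, $G$ is finite with $\len G = H_\infty$ and $g_{H_\infty} = H_\infty + i^*$; moreover $G$ is computable from $\hat h$. I then set $\gamma \deq \langle \hat h,\ \Omega_{g_1}\Omega_{g_2}\cdots\Omega_{g_{H_\infty}}\rangle$, where $\Omega_j$ is the $j$-th bit of $\Omega$; thus $K(\gamma) \lun H_\infty + K(h)$, and from $\gamma$ one recovers $\hat h$, $G$, all the $\ell_m$, and the stored bits.

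For the lower bound on the reach, the advice-needed positions $\le h(i)+i$ are exactly $\ell_1,\dots,\ell_i$, so a program carrying the bits $\Omega_{\ell_1},\dots,\Omega_{\ell_i}$ (roughly $i$ bits), together with the fixed routine that reads $\hat h$ and the stored bits off $\gamma$ and interleaves everything, computes the \emph{contiguous} prefix $\finom{h(i)+i}$, which is strictly below $\Omega$ since $\Omega$ is irrational. Hence $K(\finom{h(i)+i} \giv \gamma) \llog i$, so $\spit_\gamma(i) \glog h(i)+i$ and $\hd_\gamma(i) \glog h(i)$: the curve of $\hd_\gamma$ sits (essentially) on or above that of $h$.

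The heart of the matter is the matching upper bound, whose engine is the lemma that the advice-needed bits of $\Omega$ remain essentially incompressible even given \emph{all} of $\gamma$:
\[
K\bigl(\Omega_{\ell_1}\Omega_{\ell_2}\cdots\Omega_{\ell_m} \giv \gamma\bigr) \ \gun\ m - K(h) \qquad \text{for every } m .
\]
I would prove it from the observation that, given $\hat h$, the prefix $\finom{g_{H_\infty}}$ is exactly the $h$-prescribed interleaving of $\gamma$'s stored bits with $\Omega_{\ell_1},\dots,\Omega_{\ell_{i^*}}$, so $K(\finom{g_{H_\infty}} \giv \gamma) \lun K(\Omega_{\ell_1}\cdots\Omega_{\ell_{i^*}} \giv \gamma)$; feeding this $K(\finom N) \gun N$, $g_{H_\infty} = H_\infty + i^*$, $K(\gamma) \lun H_\infty + K(h)$, and the (clean) chain rule — together with $K(\Omega_{\ell_{m+1}}\cdots\Omega_{\ell_{i^*}} \giv \gamma, \Omega_{\ell_1}\cdots\Omega_{\ell_m}) \lun i^*-m$ (that many bits, their count computable from the conditions) — gives the bound for $m \le i^*$, and a one-line extension (the later $\ell_m$ run consecutively past $g_{H_\infty}$) covers $m > i^*$. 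Granting the lemma: if some $i$-bit program computed from $\gamma$ a string $\finom s\beta < \Omega$ with $s$ reaching into block $B_{i+c-1}$, then $\finom s$ contains $\Omega_{\ell_1},\dots,\Omega_{\ell_{i+c}}$, forcing $K(\Omega_{\ell_1}\cdots\Omega_{\ell_{i+c}} \giv \gamma) \llog i$ and contradicting the lemma once $c$ exceeds an $O(K(h))$ margin; as $\ell_{i+c}-i = h(i+c-1)+c$, this reads $\hd_\gamma(i) \llog h(i + O(K(h)))$, the reverse inclusion.

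The main obstacle I anticipate is making that last step airtight against the suffix $\beta$ allowed in the definition of $\spit$ (monotone-complexity "rounding"), and the associated self-delimitation bookkeeping: to read $\finom s$ off the output $\finom s\beta$ one must know $s$, which can be astronomically large, so a crude $+K(s)$ is fatal. I expect to handle this by running the contradiction at the level of $\finom{g_{H_\infty}}$ directly — where all relevant lengths are computable from $\hat h$ and cost only $O(K(h))$ — or by first reducing to witnesses with $\len\beta$ small (truncating $\beta$ keeps $\finom s\beta < \Omega$) and absorbing the residual $O(\log)$ terms into the tolerance, in the same spirit as the $O(\log i)$ slack already inherent in the reach curve $K(\finom i^z \giv z) \lun i + K(i)$. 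No separate argument is needed at infinity: $\lim_i \hd_\gamma(i) = I(\gamma : \Omega) \elog H_\infty = \lim_i h(i)$, since the stored segment of $\Omega$ carries exactly $H_\infty$ bits of information about $\Omega$.
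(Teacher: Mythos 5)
Your construction of~$\gamma$ --- interleaving bits of~$\Omega$ so that each unit of advice unlocks the prescribed prefix --- is the same as the paper's, and so is the lower-bound (reach) argument; the paper makes~$\hat h$ implicit in the positions of the holes rather than explicitly storing it in~$\gamma$, but that is cosmetic. Where you diverge is the upper bound. The paper dispatches it in one sentence, ``a program of such a length could not compute a larger prefix, since it would contradict the incompressibility of~$\Omega$,'' whereas you isolate the key lemma that the advice-needed bits $\Omega_{\ell_1}\cdots\Omega_{\ell_m}$ stay essentially incompressible \emph{given all of~$\gamma$}, and prove it from $K(\finom{g_{H_\infty}}) \gun g_{H_\infty}$, $K(\gamma)\lun H_\infty + K(h)$, and the chain rule. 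This is a genuine improvement in rigour: plain incompressibility of~$\Omega$ combined with $K(\gamma)\lun H_\infty + K(h)$ only gives the asymptotic bound $\spit_\gamma(i)\lun H_\infty + i + O(K(h))$, which is slack for $i<i^*$; the pointwise bound $\hd_\gamma(i)\llog h(i)$ really does need something like your lemma, and your chain-rule derivation of it (including the split at $m$ and the one-line extension to $m>i^*$) checks out. You also correctly flag the self-delimitation issue with the monotone suffix~$\beta$ --- that a crude $+K(s)$ is fatal --- and your proposed fix (read off only $\finom{\ell_m}$, with $\ell_m$ computable from $\hat h$ and $m$, paying $K(m)\llog i$; or argue at $\finom{g_{H_\infty}}$ directly) is sound. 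Note that both your proof and the paper's carry an unavoidable $O(\log i)$ term (from $K(h,i)$ in the paper's lower bound, from $K(m)$ in your upper bound); the statement's tolerance of $O(K(h))$ is therefore being read with the usual logarithmic slack silently absorbed, which you explicitly acknowledge and the paper does not.
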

\begin{proof}
This proof is about playing a game with the bits of~$\Omega$, in which one basically encodes the graph of~$h(i)$ into $\gamma$, as to which bits of $\Omega$ are given. Let $a_0$ be the first integer mapped to a non-null value, $b_0 = h(a_0)$. Let $a_1$, $a_2$, \ldots, $a_m$ all the values at which~$h$ increases, and~$b_1$,~$b_2$, \ldots,~$b_m$ the corresponding amounts by which $h$ increases. Define
$$
\gamma \equiv \Omega_{a_0} \dots \Omega_{a_0+b_0} \Omega_{a_0+b_0+a_1} \dots \Omega_{a_0+b_0+a_1+b_1} \Omega_{a_0+b_0+a_1+b_1+a_2} \dots
\Omega_{\sum_0^m a_i \sum_0^m b_i} \,,
$$
where~$\Omega_c$ stands for the~$c$-th bit of~$\Omega$.
From~$\gamma$ and~$\eun i + K(h, i)$ bits of advice, a prefix of~$\Omega$ is obtained by ``patching its holes'' with a string of length~$i$ defined as
\bes
\delta = \Omega_{1} \dots \Omega_{a_0-1} \Omega_{a_0+b_0+1} \dots \Omega_{a_0+b_0+a_1-1} \dots \Omega_{h(i) + i -1}\Omega_{h(i) + i} \,.
\ees 
The extra $K(h,i)$ bits are required for delimitation purposes: Not only self-delimitation of~$\delta$, but mostly to unravel the bits of~$\gamma$ and the bits of~$\delta$ in order to assemble~$\finom{i+h(i)}$.
This particular choice of advice shows that
\bes
\spit_\gamma(i + K(h, i) + O(1)) \geq i + h(i) \,.
\ees
A program of such a length could not compute a larger prefix, since it would contradict the incompressibility of~$\Omega$.
\end{proof}



Let me return to where we started. The conditional profiles~$\Lambda_{y\giv z}$ and~$\mathcal L_{y \giv z}$ do not correspond. They have been shown to be equivalently represented by the profiles~$\chaitina_{y \giv z}$ and~$\chaitinb_{y \giv z}$, respectively, whose difference is a horizontal distortion quantified by~$H_z(i)$. 
This distortion measures the difference of flow between the~$\Omega^z$ and the~$\Omega$ clocks, which is related to the difficulty of~$z$ to materialize its halting information in terms of a prefix of $\Omega$.

\section{Depth and Sophistication of Pairs} \label{secpair}

As mentioned in Section~\ref{sec:crtp}, a cornerstone of the algorithmic theory of information is the chain rule, eq.~\eqref{eqcr}, which relates the complexity of a pair 
 to that of a single string and a conditional homologue. Logical depth and sophistication arose from an effort to measure the meaningful information in a string, and not just its randomness. In the light of the previous results, depth and sophistication of pairs can now  
be expressed in terms of their single string and conditional versions.

For tidier expressions characterizing depth and sophistication for pairs, one should free the concepts from their significance parameters, keeping only the essence of what they capture. 
This is achieved when the significance parameters are taken as small as possible. 
%
\subsection{Depth$_0$}

For the busy beaver depth, the natural candidate of a parameter-free version is $\bdepth_0(\cdot)$. 
It amounts to the busy running time of the (fastest) shortest program. The significance parameter of the busy beaver depth can meaningfully be taken to $0$, because time profiles are not naturally bumpy: Even the smallest drop of one unit deep in the $\mathcal L_x$ profile of some string~$x$ is very significant. Such a drop grasps that~$x$ contains a lot of mutual information with a prefix of~$\Omega$, simply through the running time of its shortest program.

However, such a micro drop as the last drop of the profile is problematic in the task of formulating a relation between $\bdepth_0(x,y)$, $\bdepth_0(x)$ and $\bdepth_0(y\given x)$, since the main tool at hand is the relation~\eqref{eqsickrel}, $\mathcal L_{x,y} \elog \mathcal L_{x} + \mathcal L_{y\giv x}$, which incorporates errors of logarithmic order on the $Y$ axis. Recall that the depth profile $\mathcal D$, Eq.~\eqref{eqdepthprofile}, is a downwards translation of the time profile $\mathcal L$, with $\bdepth_c$ being represented as the $\graphy$ of~$\mathcal D$. Consequently, the errors of logarithmic order transpose on the axis of the depth's significance parameter. 
To keep the discussion grounded in the ideas, I will avoid the conundrum by imposing an extra constraint on the considered profiles. 
The strings $x$ and $y$ are said to have $\mathcal L$-profiles with a \emph{sharp finish} if all their time profiles~(\emph{e.g.}, $\mathcal L_{x,y}$, $\mathcal L_{x \giv y}$, \dots) display a last drop that is greater than some $\varepsilon = O(\log n)$. More precisely, the parameter $\varepsilon$ is chosen greater than the sum of the error terms in Propositions~\ref{propcotefacile} and~\ref{prophardside}. This ensures that the latest drop of $\mathcal L_{x,y}$ is aligned (up to $O(1)$ resolution) with either the latest drop of $\mathcal L_{x}$ or with the latest drop of $\mathcal L_{y \giv x}$. 
The $X$ coordinate at which the latest drops happen marks the~$\bdepth_0$. Therefore, if $x$ and $y$ have profiles with a sharp finish,
\be \label{eqdepthpair}
\bdepth_0(x,y) \eun \max \{ \bdepth_0(x), \bdepth_0(y \given x)\} \,.
\ee
This relation means that the running time (in busy beaver units) of the shortest program that produces the pair $x,y$ is close to either that of $x^*$ or that of $(y\giv x)^*$. Since the relation can instead be developed on $y$ and $x\given y$, if~$x$ or~$y$ is deep, so is the pair. However, the reciprocal does not hold. When $x$ and $y$ are pieces of an antistochastic string, each of them is individually shallow but deep relative to one another, yielding a deep pair.

Finally the deterministic \emph{slow growth law} \cite{bennett1995logical} can be retrieved from Equation~\eqref{eqdepthpair}. In fact, let $y$ be a computable processing of~$x$. 
If $x$ is shallow, but $y$ is deep, then the relation implies that $y$ is deep relative to $x$: it cannot have been computed by a short \emph{and} fast program.

\subsection{A Parameter-free Sophistication?}

Exhibiting a parameter-free notion of sophistication is a more sophisticated task ;-). In an aphorism, sophistication is the complexity of the minimal sufficient statistic, but then, what is the precise criterion for a statistic to be sufficient?
A sufficient statistic is often~(\emph{e.g.}, \cite[\S 2]{vv2002} \cite[\S 5]{vitanyi2006meaningful} \cite{gacs2001algorithmic}) defined to be an~$S \ni x$ that satisfies
\be \label{eqtoodiff}
K(S) + \log|S| = K(x) + O(1)\,.
\ee
However, the nature of two-part descriptions generally makes this relation too difficult to satisfy.

Before I elaborate more on this, I must mention that Antunes and Fortnow~\cite{antunes2009sophistication} approached the problem of liberating sophistication from its parameter by including it in the minimization. \emph{Coarse sophistication} is thus defined as
\bes
\text{cSoph}(x) = \min_c~\{\soph_c(x) + c\} \,.
\ees
This definition suffers from the problem that it does not do justice to the most sophisticated strings of a fixed length~$n$. Indeed those have an antistochastic-like profile, with a drop (of height $\delta = n-K(x)$) as late as possible (at $K(x)$). A late drop as such forces $K(x)$ to be close to $n$, thereby shrinking the height $\delta$ of the drop. Consider a string~$x$ as such with $\delta$ small, but still in $\Omega(n)$. Its sophistication is large: $\soph_c(x)=K(x)$, for $c \leq \delta$, as witnessed by its only minimal sufficient statistic $\{x\}$. However, its coarse sophistication collapses to $\delta$, as witnessed by $\{0,1\}^n$.

I come back to the perhaps too strict constraints of the criterion of Eq.~\eqref{eqtoodiff}. As mentioned in the preliminaries, the shortest one-part description for $x$, this is~$x^*$, in itself carries more algorithmic information than $x$ alone: It carries its own length~$K(x)$, 

\smallskip
\begin{center} 
\begin{picture}(120,30)(-45,0)
\put(0,0){\framebox(30,30){$O(1)$}} 
\put(-23,15){\vector(1,0){20}}
\put(-43,5){\makebox(20,20){$x^*$}} 
\put(33,23){\vector(1,0){20}}
\put(33,7){\vector(1,0){20}}
\put(53,13){\makebox(20,20){$x$}} 
\put(58,-3){\makebox(22,20){$K(x)$\,.}} 
\end{picture}
\end{center}

For the same self-delimitation reason, a two-part description $D(S^*, i^x_S) = \alpha S^* i_S^x$ carries in itself two implicit lengths: those of each part. 
 Thereby,
%

\smallskip
\begin{center} 
\begin{picture}(120,46)(-45,0)
\put(0,0){\framebox(30,46){$O(1)$}} 
\put(33,39){\vector(1,0){20}}
\put(33,23){\vector(1,0){20}}
\put(33,7){\vector(1,0){20}}
\put(53,29){\makebox(20,20){$x$}} 
\put(59,13){\makebox(20,20){$K(S)$}} 
\put(62,-3){\makebox(25,20){$\log |S|$ \,,}} 
\put(-23,32){\vector(1,0){20}}
\put(-23,14){\vector(1,0){20}}
\put(-43,22){\makebox(20,20){$S^*$}} 
\put(-43,4){\makebox(20,20){$i^x_S$}} 
\end{picture}
\end{center}
so $K(S) + \log|S| \gun K(x, K(S), \log|S|) \eun K(x) + K(K(S), \log|S| \given x, K(x))$.

For~$K(S) + \log|S|$ in the vicinity of~$K(x)$, the extra complexity brought by the last term is essentially that of a delimiter, $K(S)$, that breaks the number~$K(x)$ in two pieces. Arguments can be made that by increasing the value of that delimiter, it will eventually be of small complexity, given~$K(x)$. But can this ``small'' be qualified to be $O(1)$? No, since in general the exact value of this complexity cannot be set uniformly for all~$x$, except, obviously, when the delimiter reaches the end of the spectrum,~$K(S) \eun K(x)$, with~$S=\{x\}$.
Therefore, the tail of the $\Lambda$-profile is not smooth, since unlike with the $\mathcal L$-profile, small deeper drops may meaninglessly occur. Indeed, these may simply be an artifact of a model $S\ni x$ with larger $K(S)$, but with smaller $K(K(S)\given x, K(x))$.

Hence, a parameter-free notion of sophistication should 
accommodate the fact that~$K(S)$ is in general completely independent from the algorithmic information of~$K(x)$. 
For instance, in the proof of Proposition~\ref{thmlslamcond}, where a model of~$x$ was built from a program that computes~$x$, the length of the two-part description was large enough to encompass the complexity of the delimiter between each part of the description. In fact, this can be seen from Equation~\eqref{eqsuffstatcond}, which reduces to
\bes
K(A)+ \log |A| \lun K(x) + K(i) ~~~~~\text{with}~~~~i \elog K(A) \,,
\ees
if $z= \epsilon$ (no auxiliary information) and $\ell = K(x)$ (build the shortest two-part description from the shortest program). 
Therefore as a candidate for a parameter-free sophistication, one could take
\bes
\min_{S\ni x} \{K(S) \st K(S) + \log|S| \leq K(x) + K(K(S)) + O(1) \} \,,
\ees
which is guaranteed to be witnessed early enough by the two-part description built in the proof of~\ref{thmlslamcond}, for appropriate choice of Proposition~$O(1)$. However, if we are to rely on \emph{the proof of the equivalence} between $\Lambda$ and $\chaitin$ to define sophistication without parameters, we might as well rely on \emph{the equivalence itself}. Like~$\mathcal L$, and unlike~$\Lambda$, $\chaitin$ has a smooth, constant tail of profile, 
 which enables a meaningful definition at $0$ bits of significance.

I then define the \emph{parameter-free sophistication},
 and its conditional homologue, as
\beas
\soph(x) &\deq& \min \{i \st (i,K(x))\in \chaitin_x \} \\
\soph(y \given z) &\deq& \min \{i \st (i,K(y \given z) )\in \chaitina_{y\giv z} \} \,.
\eeas
The unconditional version coincides with Bauwens's~\cite{bauwens2010computability} $m$-sophistication\footnote{With the \emph{a priori} probability as a universal semi-measure.} $k_0$, and within logarithmic precision, with $\bdepth_0$. The conditional version, however, follows~$\chaitina_{y\giv z} \elog \Lambda_{y\giv z}$ instead of $\mathcal L_{y \giv z}$.

With these definitions at hand, the results of Section~\ref{secgap} imply that if $x$ and $y$ have $\chaitin$-profiles with sharp finish,
\bea \label{eqsophpair}
\soph (x,y) 
&\eun& \max \left \{\soph (x),~ \spit_x \left (\soph (y \given x )\right )\right \} \\
&=& \max \left \{\soph (x),~ \soph (y \given x ) + \hd_x \left ( \soph (y \given x )\right )\right \} \,. \nonumber
\eea



Recall the example of Section~\ref{secfirstanti}, showcasing an antistochastic string $z=xy$.
The gap between the conditional profiles illustrated in Figure~\ref{figdiffprof} can now be understood in terms of the halting materialization distribution $H_x(i)$, evaluated at $i=k-|x|$, which consistently amounts to $|x|$. Indeed, from~$x$ and~$\elog k-|x|$ bits of advice (taken from $y$), Milovanov's reconstruction of $z$ can be performed. By the shape of $\mathcal L_z$, such a short program must run for at least a busy running time of~$k$, which is long enough to stabilize~$\glog k$ bits of $\Omega$. 
 Therefore $\spit_x(k-|x|) \elog k$ so~$\hd_x(k-|x|)\elog|x|$. 


Finally, the parameter-free depth and sophistication correspond to the first coordinate of the ``bottom left corners'' of each profile displayed on Figure~\ref{figdiffprof}. Recalling that $\mathcal L_x \elog \Lambda_x$ and $\mathcal L_{x,y} \elog \Lambda_{x,y}$, one finds
\bes
\begin{array}{cccl}
\bdepth_0 (x, y) \elog k \discom& \bdepth_0 (x) \elog 0 \,,&&  \bdepth_0 (y \given x) \elog k \,, \\
\soph (x,y) \elog k \discom& \soph (x) \elog 0 &  \text{and}~~~ &
\soph (y \given x) \elog k - |x|\,. 
\end{array}
\ees
And the established relations~\eqref{eqdepthpair} and~\eqref{eqsophpair} are easily verified.


\section{Conclusions}

The goal has been reached. Thanks to the time profile chain rule of~$\S\ref{sec:crtp}$, the busy beaver depth of a pair~$\bdepth_0(x,y)$ can be expressed in terms of~$\bdepth_0(x)$ and~$\bdepth_0(y\given x)$, simply as their maximum. Had the equivalence of depth and sophistication been carried over by the relative case, it would have been straightforward to formulate a sophistication analogue. The nature of the gap between relative depth and relative sophistication was enlightened in the detour of~\S\ref{secgap}. Best journeys have detours;  
it turns out that this gap reveals more subtle structures in a string~$x$ than those expressed by the Kolmogorov structure function, equivalently represented by~$\Lambda_x$ or~$\mathcal L_x$. In fact, the halting materialization distribution~$\hd_x$ expresses the ability --- or the difficulty --- for~$x$ to solve the halting problem from advices of increasing size.

The antistochastic string~$z$ --- and pieces~$x$ and~$y$ of it --- served as an anchor throughout the paper. Although~$x$ has the same Kolmogorov structure function as any incompressible string, 
 its halting materialization distribution~$H_x$ is very different from that of typical strings: it knows about the halting problem --- and in a somewhat peculiar way. With not enough bits of advice, $x$ is useless to solve \emph{any} halting problem. However, with a large enough advice, its irreducible halting information, \emph{i.e.}, all of its algorithmic information, becomes useful. For more on antistochastic strings, see \S\ref{anholo}.
%

\nit{The Irrelevant Oracle Problem}~\cite{muchnik2010stability}. 
\nopagebreak

\noi  
The gist of that problem can be formulated as follows.
 From a pair of strings~$(x,y)$,~$c$ bits of \emph{common information can be extracted}, for a threshold $t$, if there exists $\gamma$ such that
\bes
K(\gamma \given x) < t \discom K(\gamma \given y) < t \disand K(\gamma) \geq c \,.
\ees
 Assume that~$I(\langle x,y \rangle \colon z) \elog 0$. Can this~$z$ (an apparently irrelevant oracle) help to extract common information between~$x$ and~$y$, \emph{e.g.}, by altering the values $t$ and~$c$ in the relativized case? Muchnik and Romashchenko~\cite{muchnik2010stability} have provided a negative answer when~$x$ and~$y$ are \emph{stochastic} strings, that is, their $\Lambda$ profile contains as many points as possible. But the general case is still open. Can the halting materialization distribution find an application to the problem? 

%


\nit{Depth from Expectation.}

\nopagebreak
\noi The logical depth of~$x$ is defined as the running time of its most probable programs, namely, the shorter ones. This allows us to ignore the fast but long programs, 
such as the ``$\mathtt{Print}~x$'' program. But if such an origin is anyways algorithmically improbable, why not defining the logical depth as the \emph{expected} running time of the computational origines of~$x$; with expectation taken over the algorithmic probability? Something like
\bes
\sum_{p:\U(p)=x} 2^{-|p|} \rt(p) ~?
\ees
It is a nice try, but it makes no sense since this sum diverges for all~$x$. Indeed,
 there exists infinitely many programs~$q$ that purposefully run for much longer than~$2^{|q|}$ steps before producing~$x$.

However, thanks to the busy badger renormalisation, this expectation interpretation of the logical depth can be brought to life. 
I Define the \emph{expected time on the $\Omega$ clock} as
\bes
\mathbb E (\theta_x) \deq \sum_{p:\,\U(p)=x} 2^{-\len p}\theta(p) \,,
\ees
which can be shown to converge for all $x$. It suffices to show that it converges for halting programs. Indeed,
\bes
\mathbb E (\theta_{^{_{^\halts}}}) \deq \sum_{p:\,\U(p)=\halts} 2^{-\len p}\theta(p)
\leq \sum_{\theta} 2^{-\theta} \theta 
= 2 \,.
\ees
The inequality comes from reorganizing the sum, and noticing that the mass of programs running in time $\theta$ or slower on the $\Omega$ clock is less than $2^{-\theta}$, otherwise the value of~$\Omega$ would be contradicted.
This meaningful notion of logical depth as expected running time could perhaps be connected to existing concepts, such as~$\bdepth_0(x)$, which could enhance the justification of its use as the parameter-free depth.




\nit{Programs as Ideas}
\nopagebreak

\noi In Ref.~\cite{bedard2019emergence}, Geoffroy Bergeron and I suggested that the notion of emergence could be associated with the existence of strings that display many drops in their structure function, or in their~$\Lambda_x$ profile. 
Algorithmic models that witness a drop can be thought of a~\emph{new idea}, or a new way to explain the data~$x$. Understanding this concept from the time profile~$\mathcal L_x$ perspective, one finds that those new ideas are equally expressed by programs. The fast but long ``$\mathtt{Print}~x$'' program expresses something radically different from the slow but short~$x^*$. In the middle, everything is possible for some strings thanks to ``All shapes are possible''.  
In particular, there exists a string that admits a very slow~$x^*$ and a very fast program~$p_{\text{scoop}}$, of length that exceeds~$K(x)$ only by an additive logarithmic term...
%


\nit{Algorithmic Randomness in the Universe.}
\nopagebreak

\noi Preeminent physical theories indicate that the Universe originated in a simple state, and has ever since followed algorithmically simple laws. Through a lengthy computation of 14 billion years on what could be thought of as the most powerful computer of the Universe --- the Universe itself --- interesting, non-trivial, deep structures emerged. This is the essence of logical depth.

But what superficially appeared as an easier question might in fact remain a puzzle: how can incidental randomness --- genuine algorithmic randomness --- come about from a simple ``computable'' Universe? 
I see two elements of a tentative answer. First, the only kind of such algorithmic randomness that could be generated is halting information. And it will prosaically arise in time, as any increasing numbers solve ever more halting problems\footnote{This vision is in sharp contrast with Levin's who does not believe that strings with significant mutual information with the halting problem could exist in the world~\cite{levinprivate, vv2002}.}.

Second, what we may think to be fragments of disorder, genuine incidental randomness independent of $\Omega$, may in fact only be pieces of antistochasticity. 
In surface, they seam to be useless noise, but may in fact encode, holographically, the truths about the Universe, \emph{i.e.}, halting information~\cite{chaitin2007}.
This holographic encoding of such deep facts may explain what Deutsch~\cite{deutsch2011beginning} refers to as ``[o]ne of the most remarkable things about science'', namely, ``the contrast between the enormous reach and power of our best theories and the precarious, local means by which we create them.''


\section*{Acknowledgements}
My work is supported by Canada's Natural Sciences and Engineering Research Council (NSERC). 
I am grateful to Charles H. Bennett, Geoffroy Bergeron, Gilles Brassard, Xavier Coiteux-Roy, Samuel Ducharme and Pierre McKenzie for fruitful discussions.
I also wish to thank the Institute for Quantum Optics and Quantum Information of Vienna, in particular, Marcus Huber's group, for a warm welcome and inspiring discussions. 
Last but not least, I am grateful to the veranda, at \emph{Le domaine du pin solitaire}, where this work has been elaborated.

\newpage
\section{Appendix: Holographic Reconstruction from Time Considerations} \label{anholo}


I comment briefly on Milovanov's holographic reconstruction understood by time considerations. Consider as before an antistochastic string~$z=ab$ of length~$n$ and complexity~$k\leq n$ and let~$\len a=k$.
Because of its length, $K(a) \llog k$; but also, $K(a) \glog k$. In fact, 
running $a^*$ and concatenating it to $b$ is one way to compute~$z$, which is 
 of length $K(a) + n - k$ and runs for at most $\bb(K(a))$. This contradicts the time profile~$\mathcal L_z$ unless $K(a) \elog k$. This also means that $a^*$ has busy running time~$\elog k$, namely, the same as running time as $z^*$.

\nit{Claim:} There are at most $2^{s + O(\log n)}$ programs of length $\leq k$ that halt after $\bb(k-s)$ steps\footnote{This is shown in Ref.~\cite[Proposition 13]{vereshchagin2017algorithmic}. Otherwise, it can be understood from the closeness between the busy beaver and the busy badger, and that if too many programs are left to halt, the sum~$M$ would overshoot~$\Omega$.}. 


So letting $s= O(\log n)$, $z^*$ (and so $z$), can be found from an $O(\log n)$ advice if~$\bb(k)$ --- or $\halting{\leq k}$ --- is known, because~$z^*$ is known to be in the last~$2^{O(\log n)}$ halting programs. Therefore, the antistochastic string $z$ becomes simple if the halting problem is solved, which is what $a$ is for.
In fact, from $a$, the logarithmic advice is $K(a)$ permits to find~$a^*$, and by its running time compute~$\bb(k)$ or~$\halting{\leq k}$. 

Since antistochastic strings know so much about the halting problem, the halting problem knows so much about them, making them simple! 
This is the essence of the holographic idea. Any piece of information that solves $\halting{\leq k}$ renders $z$ simple to determine, because one can now start specifying strings from the end of the enumeration. And the particularity of the $\mathcal L_z$ profile ensures that any piece of it that is long enough can be use to determine $\halting{\leq k}$ from a logarithmic advice. 

\bibliographystyle{plain}
\bibliography{refs.bib}

\begin{thebibliography}{10}

\bibitem{antunes2017sophistication}
Lu{\'\i}s Antunes, Bruno Bauwens, Andr{\'e} Souto, and Andreia Teixeira.
\newblock Sophistication vs logical depth.
\newblock {\em Theory of Computing Systems}, 60(2):280--298, 2017.

\bibitem{antunes2009sophistication}
Lu{\'\i}s Antunes and Lance Fortnow.
\newblock Sophistication revisited.
\newblock {\em Theory of Computing Systems}, 45(1):150--161, 2009.

\bibitem{ay2010effective}
Nihat Ay, Markus Muller, and Arleta Szkola.
\newblock Effective complexity and its relation to logical depth.
\newblock {\em IEEE transactions on information theory}, 56(9):4593--4607,
  2010.

\bibitem{bauwens2009equivalence}
Bruno Bauwens.
\newblock On the equivalence between minimal sufficient statistics, minimal
  typical models and initial segments of the halting sequence.
\newblock {\em arXiv preprint arXiv:0911.4521}, 2009.

\bibitem{bauwens2010computability}
Bruno Bauwens.
\newblock {\em Computability in statistical hypotheses testing, and
  characterizations of independence and directed influences in time series
  using {K}olmogorov complexity}.
\newblock PhD thesis, Ghent University, 2010.

\bibitem{bedard2019emergence}
Charles~Alexandre B\'edard and Geoffroy Bergeron.
\newblock An algorithmic approach to quantify emergence.
\newblock In {\em Conf{\'e}rence de l'Institut transdisciplinaire d'information
  quantique}, 2018.

\bibitem{bennett1995logical}
Charles~H Bennett.
\newblock Logical depth and physical complexity.
\newblock {\em The Universal Turing Machine A Half-Century Survey}, pages
  227--257, 1988.

\bibitem{Chaitin1975}
Gregory~J Chaitin.
\newblock {A Theory of program size formally identical to information theory}.
\newblock {\em Journal of the ACM}, 22(3):329--340, 1975.

\bibitem{chaitin2007}
Gregory~J Chaitin.
\newblock The halting probability omega: {I}rreducible complexity in pure
  mathematics.
\newblock {\em Milan Journal of Mathematics}, 75(1):291--304, 2007.

\bibitem{deutsch2011beginning}
David Deutsch.
\newblock {\em The beginning of infinity: Explanations that transform the
  world}.
\newblock Penguin UK, 2011.

\bibitem{gacs1974symmetry}
Peter G{\'a}cs.
\newblock On the symmetry of algorithmic information.
\newblock In {\em Doklady Akademii Nauk}, volume 218, pages 1265--1267. Russian
  Academy of Sciences, 1974.

\bibitem{gacs2001algorithmic}
P{\'e}ter G{\'a}cs, John~T Tromp, and Paul~MB Vit{\'a}nyi.
\newblock Algorithmic statistics.
\newblock {\em IEEE Transactions on Information Theory}, 47(6):2443--2463,
  2001.

\bibitem{gell1996information}
Murray Gell-Mann and Seth Lloyd.
\newblock Information measures, effective complexity, and total information.
\newblock {\em Complexity}, 2(1):44--52, 1996.

\bibitem{Kolmogorov1965}
Andre{\"\i}~N Kolmogorov.
\newblock {Three approaches to the quantitative definition of information}.
\newblock {\em Problemy Peredachi Informatsii}, 1(1):3--11, 1965.

\bibitem{kolmogorov1974talk}
Andre{\"\i}~N Kolmogorov.
\newblock Talk at the {I}nformation {T}heory {S}ymposium in {T}allinn.
\newblock {\em Estonia (then USSR)}, 1974.

\bibitem{koppel1987complexity}
Moshe Koppel.
\newblock Complexity, depth, and sophistication.
\newblock {\em Complex Systems}, 1(6):1087--1091, 1987.

\bibitem{levinprivate}
Leonid~A Levin.
\newblock Private communication.
\newblock e-mails to P. Vit{\'a}nyi, Feb. 2002.

\bibitem{Li2008}
Ming Li and Paul Vit{\'{a}}nyi.
\newblock {\em An Introduction to {K}olmogorov Complexity and its
  Applications}.
\newblock Springer, New York, 2008.

\bibitem{longpre1986resource}
Luc Longpr{\'e}.
\newblock {\em Resource bounded {K}olmogorov complexity, a link between
  computational complexity and information theory}.
\newblock PhD thesis, 1986.

\bibitem{milovanov2017some}
Alexey Milovanov.
\newblock Some properties of antistochastic strings.
\newblock {\em Theory of Computing Systems}, 61(2):521--535, 2017.

\bibitem{muchnik2010stability}
An~A Muchnik and Andrei~E Romashchenko.
\newblock Stability of properties of {K}olmogorov complexity under
  relativization.
\newblock {\em Problems of information transmission}, 46(1):38--61, 2010.

\bibitem{shenbook2}
Alexander Shen, Vladimir~A Uspensky, and Nikolay Vereshchagin.
\newblock {\em Kolmogorov complexity and algorithmic randomness.}
\newblock MCCME (Russian), 2013. English translation:
  http://www.lirmm.fr/$\tilde{}$ashen/kolmbook-eng.pdf.

\bibitem{turing37}
Alan~M Turing.
\newblock On computable numbers, with an application to the
  {E}ntscheidungsproblem.
\newblock {\em Proceedings of the London mathematical society}, 2(1):230--265,
  1937.

\bibitem{vv2002}
Nikolai Vereshchagin and Paul Vit{\'a}nyi.
\newblock Kolmogorov's structure functions with an application to the
  foundations of model selection.
\newblock In {\em Foundations of Computer Science, 2002. Proceedings. The 43rd
  Annual IEEE Symposium on}, pages 751--760. IEEE, 2002.

\bibitem{vereshchagin2017algorithmic}
Nikolay Vereshchagin and Alexander Shen.
\newblock Algorithmic statistics: forty years later.
\newblock In {\em Computability and Complexity}, pages 669--737. Springer,
  2017.

\bibitem{vitanyi2006meaningful}
Paul~M Vit{\'a}nyi.
\newblock Meaningful information.
\newblock {\em IEEE Transactions on Information Theory}, 52(10):4617--4626,
  2006.

\end{thebibliography}
\end{document}